\newtheorem{theorem}{Theorem}
\theoremstyle{definition}
\newtheorem{definition}{Definition}
\theoremstyle{remark}
\newtheorem{remark}{Remark}
\theoremstyle{definition}
\theoremstyle{definition}
\newtheorem{example}{Example}
\newcommand{\R}{\mathbb{R}}
\newcommand{\C}{\mathcal{C}}
\definecolor{darkblue}{RGB}{0,0,102}
\definecolor{lightblue}{RGB}{77,77,148}
\definecolor{gold}{RGB}{234, 170, 0}
\definecolor{metallic_gold}{RGB}{139, 111, 78}
\newcommand{\defeq}{\triangleq}
\renewcommand{\cal}[1]{\mathcal{ #1 }}
\newcommand{\mb}[1]{\mathbf{ #1 }}
\newcommand{\bs}[1]{\boldsymbol{ #1 }}
\DeclareMathOperator*{\argmin}{argmin}
\newcommand{\derp}[2]{\frac{\partial #1 }{\partial #2 }}
\newcommand{\dVdx}{\frac{\partial V_{0}}{\partial \mb{x}}(\mb{x})}
\newcommand{\dhdx}{\frac{\partial h_{0}}{\partial \mb{x}}(\mb{x})}
\newcommand{\kd}{\mb{k}_{\rm 0}(\mb{x})}
\newcommand{\kdh}{\mb{k}_{\rm 0}(\mb{x})}
\newcommand{\kdV}{\mb{k}_{\rm 0}(\mb{x})}
\newcommand{\dkVdx}{\frac{\partial \mb{k}_{\rm 0}}{\partial \mb{x}}(\mb{x})}
\newcommand{\dkhdx}{\frac{\partial \mb{k}_{\rm 0}}{\partial \mb{x}}(\mb{x})}
\newcommand{\dotx}{\mb{f}_0(\mb{x}) + \mb{g}_0(\mb{x}) \bs{\xi}}
\newcommand{\dotxi}{\mb{f}_1(\mb{x},\bs{\xi}) + \mb{g}_1(\mb{x},\bs{\xi}) \mb{u}}
\newcommand{\setdefb}[2]{\big\{#1 \; | \; #2\big\}}
\newcommand{\Uc}{\mathcal{U}}
\newcommand{\map}[3]{#1:#2 \rightarrow #3}
\title{\LARGE \textbf{Safe Backstepping with Control Barrier Functions}}
\author{Andrew J. Taylor, Pio Ong, Tamas G. Molnar, and Aaron D. Ames
\thanks{This research is supported in part by Ford, the National Science Foundation (CPS Award \#1932091,  CMMI Award \#1923239), Raytheon Technologies, Aerovironment and Dow (\#227027AT).}
\thanks{
The authors are with the Department of Computing and Mathematical Sciences and the Department of Mechanical and Civil Engineering, California Institute of Technology, Pasadena, CA 91125, USA, {\tt\small \{ajtaylor, pioong, tmolnar, ames\}@caltech.edu}.
}
}
\begin{document}

\maketitle

\begin{abstract}%
Complex control systems are often described in a layered fashion, represented as \emph{higher-order systems} where the inputs appear after a chain of integrators.
While Control Barrier Functions (CBFs) have proven to be powerful tools for safety-critical controller design of nonlinear systems, their application to higher-order systems adds complexity to the controller synthesis process---it necessitates dynamically extending the CBF to include higher order terms, which consequently modifies the safe set in complex ways.   
We propose an alternative approach for addressing safety of higher-order systems through \textit{Control Barrier Function Backstepping}. Drawing inspiration from the method of Lyapunov backstepping, we provide a constructive framework for synthesizing safety-critical controllers and CBFs for higher-order systems from a top-level dynamics safety specification and controller design. Furthermore, we integrate the proposed method with Lyapunov backstepping, allowing the tasks of stability and safety to be expressed individually but achieved jointly. We demonstrate the efficacy of this approach in simulation.

\end{abstract}

\section{Introduction}


\par
Safety is becoming an ever more prevalent design consideration in modern control systems as these systems are deployed in real-world environments. 
Control Barrier Functions (CBFs) have become a popular tool for constructively synthesizing controllers that endow nonlinear systems with rigorous guarantees of safety \cite{ames2014control, ames2019control}. Originally posed such that the input of the system directly impacted the time derivative of the CBF, recent work has sought to extend this to higher-order nonlinear systems in which multiple time derivatives are required for the input to influence the evolution of the CBF \cite{nguyen2016exponential, xiao2019control, xiao2021high,breeden2021high}. While these works allow for the safety-critical control of higher-order systems, they require verifying the feasibility of CBF conditions using the full system dynamics and change the safe set in complex ways. 
Alternatively, the work in \cite{molnar2021model} has explored designing CBFs for a top-level model, and using a tracking controller that addresses the full system dynamics.

\par
As the complexity of systems increase, it is often desirable to approach the control design process with a simplified top-level model that guides design for subsystems addressing the full system dynamics.
Backstepping is a well established design technique for addressing the robust stabilization of layered systems of this form, i.e., nonlinear systems with \emph{higher-order dynamics} \cite{freeman1993backstepping, sepulchre2012constructive}. It considers design for the top-level model and recursively designs a controller using the full system dynamics, also allowing it to address the challenge of \textit{mixed-relative degree}, where inputs enter the system dynamics at different levels. Using backstepping to stabilize systems while meeting state constraints has been studied through lens of non-overshooting control \cite{krstic2006nonovershooting}, and has recently been related to CBFs \cite{abel2022prescribed, koga2021safe}. These works achieve safe behavior using a structured controller that yields a linear dynamic relationship between sequential states in a cascade, such that a system does not overshoot a setpoint as it stabilizes. Other work has used backstepping in the context of Lyapunov-Barrier functions to ensure state constraints are met \cite{ngo2005integrator, xia2019backstepping, fu2020barrier}. These approaches couple ensuring safety with ensuring stability, which may impose strict structural requirements on safety constraints. To the best of our knowledge, decoupling stability and safety and exploring backstepping purely with safety constraints expressed through CBFs has not been considered.

\par 

A core challenge in combining CBF-based methods with backstepping is finding smooth controllers that ensure safety as backstepping requires the differentiation of controllers appearing higher in the integrator chain.   
From the conception of CBFs, they have typically been used as constraints in optimization-based controllers---either paired with CLFs \cite{ames2014control}, or filtering a desired stabilizing controller \cite{ames2019control}---and therefore are inherently non-smooth. 
Additionally, typically one wishes to design controllers that are not only safe, but are also stabilizing, precluding smooth CBF controller instantiations, e.g., using Sontag's Universal formula \cite{sontag1989universal}.
 While it may be possible to address these non-smooth challenges \cite{tanner2003backstepping,glotfelter2017nonsmooth}, we will consider the approach in \cite{ong2019universal} for synthesizing smooth controllers meeting both CLF and CBF constraints.



\par

The goal of this paper is to to unify backstepping with CBFs, thereby enabling safe controller design at multiple levels with varying degrees of model complexity.  
To this end, after a review of CBFs and Lyapunov backstepping, we begin in Section \ref{sec:barbackstep} by formulating a nonlinear controller that ensures safety of a system with a single cascade via Barrier Functions and backstepping.
A consequence of this result is that we may constructively synthesize a CBF for the full cascaded system using a CBF and smooth controller designed only considering the top-level of the system, which is often easier than directly finding a CBF for the full-order system.  Additionally, in Section \ref{sec:multibarbackstep}, we demonstrate that this approach can be generalized to the multiple-cascade setting, and address the challenge of mixed relative-degree systems.
The main result of this paper, presented in Section \ref{sec:design}, is the unification of Lyapunov and Barrier backstepping, wherein we show that by designing a controller that renders the top-level dynamics both stable and safe, we may use backstepping to achieve stability and safety of the full cascaded system. 
Importantly, using the  techniques in \cite{ong2019universal}, we are able to design a smooth top-level controller amenable to backstepping. 
These results are demonstrated in simulation in Section \ref{sec:sim} on multiple examples in the context of obstacle avoidance.

\clearpage

\section{Background}
\label{sec:background}
In this section we revisit Barrier Functions, Control Barrier Functions and Lyapunov backstepping as a precursor to introducing Control Barrier Function backstepping.

Consider a nonlinear control-affine system: 
\begin{equation}
    \label{eq:openloop1}
    \dot{\mb{x}} = \mb{f}(\mb{x})+\mb{g}(\mb{x})\mb{u},
\end{equation}
with state $\mb{x}\in\R^n$, input $\mb{u}\in\R^m$, and functions $\mb{f}:\R^n\to\R^n$ and $\mb{g}:\R^n\to\R^{n\times m}$ assumed to be locally Lipschitz continuous on $\R^n$. A locally Lipschitz continuous controller $\mb{k}:\R^n\to\R^m$ yields the closed loop system:
\begin{equation}
    \label{eq:closedloop1}
    \dot{\mb{x}} = \mb{f}(\mb{x})+\mb{g}(\mb{x})\mb{k}(\mb{x}).
\end{equation}
As the functions $\mb{f}$, $\mb{g}$, and $\mb{k}$ are locally Lipschitz continuous, for any initial condition $\mb{x}_0 \in \R^n$, there exists a maximal time interval $I(\mb{x}_0)=[0,t_{\rm max}(\mb{x}_0))$ and a unique continuously differentiable solution $\bs{\varphi}:I(\mb{x}_0)\to\R^n$ satisfying:
\begin{align}
\label{eqn:soldiff}
    \dot{\bs{\varphi}}(t) &= \mb{f}(\bs{\varphi}(t)) + \mb{g}(\bs{\varphi}(t))\mb{k}(\bs{\varphi}(t)), \\ \label{eqn:solic}
    \bs{\varphi}(0) &= \mb{x}_0,
\end{align}
for all $t\in I(\mb{x}_0)$ \cite{perko2013differential}.

\subsection{Control Barrier Functions}
We define the notion of safety in this context as forward invariance of a set in the state space. Specifically, suppose there exists a set $\C\subset \R^n$ defined as the 0-superlevel set of a continuously differentiable function $h:\R^n \to \R$:
\begin{align}
    \label{eq:C1} \C &= \left\{\mb{x} \in \R^n ~|~ h(\mb{x}) \geq 0\right\}.
\end{align}
The set $\C$ is said to be \emph{forward invariant} if for any initial condition $\mb{x}_0 \in \C$, we have $\bs{\varphi}(t)\in\C$ for all $t\in I(\mb{x}_0)$. In this case, we call the system \eqref{eq:closedloop1}  \emph{safe} with respect to the set $\C$, and refer to $\C$ as the \emph{safe set}.

Before defining Barrier Functions and Control Barrier Functions, we recall the following definitions. A continuous function $\alpha:[0,\infty)\to[0,\infty)$ is said to be \emph{class $\cal{K}_\infty$} ($\alpha\in\cal{K}_{\infty}$) if $\alpha$ is strictly monotonically increasing with $\alpha(0)=0$ and $\lim_{r\to\infty}\alpha(r)=\infty$, and a continuous function $\alpha:\R\to\R$ is said to be \emph{extended class $\cal{K}_\infty$} ($\alpha\in\cal{K}_{\infty}^{\rm e}$) if it belongs to $\cal{K}_\infty$ and $\lim_{r\to-\infty}\alpha(r)=-\infty$. We now define Barrier Functions:
\begin{definition}[\textit{Barrier Function (BF)} \cite{ames2017control}]
Let $\C\subset\R^n$ be the 0-superlevel set of a continuously differentiable function $h:\R^n\to\R$ with $\derp{h}{\mb{x}}(\mb{x}) \neq \mb{0}$ when $h(\mb{x})=0$. The function $h$ is a \emph{Barrier Function} (BF) for \eqref{eq:closedloop1} on $\C$ if there exists $\alpha\in\mathcal{K}_{\infty}^{\rm e}$ such that for all $\mb{x}\in\R^n$:
\begin{equation}
\label{eq:bf} \underbrace{\derp{h}{\mb{x}}(\mb{x})\mb{f}(\mb{x})}_{L_\mb{f}h(\mb{x})}+\underbrace{\derp{h}{\mb{x}}(\mb{x})\mb{g}(\mb{x})}_{L_\mb{g}h(\mb{x})}\mb{k}(\mb{x})\geq -\alpha(h(\mb{x})).
\end{equation}
\end{definition}
We have the following result establishing the safety of a set $\C$ for the closed-loop system \eqref{eq:closedloop1} through Barrier Functions:
\begin{theorem}[\cite{ames2017control, konda2020characterizing}]
\label{thm:bfsafety}
Let $\C\subset\R^n$ be the $0$-superlevel set of a continuously differentiable function $h:\R^n\to\R$ with $\derp{h}{\mb{x}}(\mb{x})\neq\mb{0}$ when $h(\mb{x})=0$. If $h$ is a BF for \eqref{eq:closedloop1} on $\C$, then the system \eqref{eq:closedloop1} is safe with respect to the set $\C$.
\end{theorem}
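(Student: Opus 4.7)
The plan is to prove forward invariance of $\C$ by reducing the claim to a scalar differential inequality along trajectories and then invoking a comparison argument. Fix an arbitrary initial condition $\mb{x}_0 \in \C$ and let $\bs{\varphi}:I(\mb{x}_0)\to\R^n$ denote the unique closed-loop trajectory from \eqref{eqn:soldiff}-\eqref{eqn:solic}. Define the scalar $\eta(t) \defeq h(\bs{\varphi}(t))$. Since $h$ is continuously differentiable and $\bs{\varphi}$ is continuously differentiable, the chain rule together with the BF inequality \eqref{eq:bf} evaluated at $\mb{x}=\bs{\varphi}(t)$ yields
\[
\dot{\eta}(t) = L_\mb{f}h(\bs{\varphi}(t)) + L_\mb{g}h(\bs{\varphi}(t))\mb{k}(\bs{\varphi}(t)) \geq -\alpha(\eta(t)),
\]
with initial value $\eta(0) = h(\mb{x}_0) \geq 0$.

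The next step is to conclude from this scalar differential inequality that $\eta(t) \geq 0$ throughout $I(\mb{x}_0)$. The natural tool is the scalar comparison lemma: let $y(t)$ be a solution of the Cauchy problem $\dot{y} = -\alpha(y)$ with $y(0) = \eta(0)$. Because $\alpha(0)=0$, the constant function $y \equiv 0$ is an equilibrium of the comparison ODE, so any solution starting at $y(0) \geq 0$ must remain non-negative on its maximal domain. The comparison lemma then gives $\eta(t) \geq y(t) \geq 0$ for all $t$ in the common domain of definition, so $h(\bs{\varphi}(t)) \geq 0$ and hence $\bs{\varphi}(t) \in \C$, establishing forward invariance and therefore safety.

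The main technical obstacle is that $\alpha$ is only assumed continuous and not Lipschitz, so solutions of the comparison ODE are not \textit{a priori} unique and the textbook Lipschitz form of the comparison lemma does not apply verbatim. I would address this by working with the maximal (or minimal) solution of $\dot{y}=-\alpha(y)$ and using $\alpha(0)=0$ together with monotonicity of $\alpha$ to show that the non-negative half-line is invariant under the comparison flow; an alternative is an $\varepsilon$-perturbation argument that replaces $\alpha$ by $\alpha+\varepsilon$, applies a Gronwall-type estimate on the strictly strengthened inequality $\dot{\eta} \geq -\alpha(\eta) > -(\alpha(\eta)+\varepsilon)$, and sends $\varepsilon \to 0$. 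Both routes are standard and are the approach used in the references \cite{ames2017control, konda2020characterizing}, delivering $h(\bs{\varphi}(t)) \geq 0$ for all $t \in I(\mb{x}_0)$, i.e., safety of \eqref{eq:closedloop1} with respect to $\C$.
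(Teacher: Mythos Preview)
The paper does not supply its own proof of this theorem; it is stated as background with a citation to \cite{ames2017control, konda2020characterizing}. So there is no in-paper argument to compare against, only the approaches in those references (comparison-type arguments and Nagumo/tangent-cone arguments, respectively).

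Your sketch is essentially sound, but one step is under-justified. You argue that ``$y\equiv 0$ is an equilibrium of $\dot y=-\alpha(y)$, so any solution starting at $y(0)\geq 0$ must remain non-negative.'' Without uniqueness, the presence of an equilibrium at $0$ does \emph{not} by itself prevent a solution from crossing it; what actually forces non-negativity is the sign structure of $\alpha\in\cal{K}_\infty^{\rm e}$: for $y<0$ one has $\alpha(y)<0$, hence $\dot y=-\alpha(y)>0$, so $y$ is strictly increasing in the negative region and cannot have entered it from $y=0$. Once you use this, the minimal-solution comparison you propose goes through. That same sign observation also gives a shorter direct proof that bypasses the comparison machinery entirely: if $\eta(T)<0$ for some $T$, set $t^*=\sup\{t\leq T:\eta(t)\geq 0\}$; then $\eta(t^*)=0$ and $\eta<0$ on $(t^*,T]$, whence $\dot\eta\geq -\alpha(\eta)>0$ there, giving $\eta(T)>\eta(t^*)=0$, a contradiction. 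This is effectively the Nagumo-type argument in \cite{konda2020characterizing} and is what the regularity hypothesis $\partial h/\partial\mb{x}\neq\mb{0}$ on $\{h=0\}$ is there to support.
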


Control Barrier Functions provide a tool for synthesizing controllers that enforce the safety of $\C$:

\begin{definition}[\textit{Control Barrier Function (CBF)} \cite{ames2017control}]
Let $\C\subset\R^n$ be the 0-superlevel set of a continuously differentiable function $h:\R^n\to\R$ with $\derp{h}{\mb{x}}(\mb{x}) \neq \mb{0}$ when $h(\mb{x})=0$. The function $h$ is a \emph{Control Barrier Function} (CBF) for \eqref{eq:openloop1} on $\C$ if there exists $\alpha\in\mathcal{K}_{\infty}^{\rm e}$ such that for all $\mb{x}\in\R^n$:
\begin{equation}
\label{eq:cbf}
     \sup_{\mb{u}\in\R^m} \dot{h}(\mb{x},\mb{u}) \triangleq\sup_{\mb{u}\in\R^m} {L_\mb{f}h(\mb{x})}+{L_\mb{g}h(\mb{x})}\mb{u}>-\alpha(h(\mb{x})).
\end{equation}
\end{definition}

\noindent Given a CBF $h$ for \eqref{eq:openloop1} and a corresponding ${\alpha\in\cal{K}_{\infty}^{\rm e}}$, we define the point-wise set of control values:
\begin{equation}
\label{eqn:Kcbf}
    K_{\textrm{CBF}}(\mb{x}) = \left\{\mb{u}\in\R^m ~\left|~ \dot{h}(\mb{x},\mb{u})\geq-\alpha(h(\mb{x})) \right. \right\}.
\end{equation}
This yields the following result:

\begin{theorem}[\hspace{-0.1 mm}\cite{ames2017control}] \label{thm:safety}
Let ${\C\subset\R^n}$ be the 0-superlevel set of a continuously differentiable function ${h:\R^n\to\R}$ with ${\derp{h}{\mb{x}}(\mb{x}) \neq \mb{0}}$ when ${h(\mb{x})=0}$. If $h$ is a CBF for \eqref{eq:openloop1} on $\C$, then the set $K_{\rm CBF}(\mb{x})$ is non-empty for all $\mb{x}\in\R^n$, and for any locally Lipschitz continuous controller $\mb{k}$ with ${\mb{k}(\mb{x}) \in K_{\rm CBF }(\mb{x})}$ for all ${\mb{x}\in\R^n}$, the function $h$ is a BF for \eqref{eq:closedloop1} on $\C$.
\end{theorem}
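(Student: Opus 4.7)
The plan is to prove the two claims in sequence by carefully unpacking the definitions of a CBF, the set $K_{\rm CBF}(\mb{x})$, and the BF condition. The proof should be essentially direct since the CBF condition \eqref{eq:cbf} was defined precisely so that these conclusions follow, but some care is required to handle the strict-versus-nonstrict inequality and the possibility that the supremum in \eqref{eq:cbf} is not attained.

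\textbf{Step 1: Non-emptiness of $K_{\rm CBF}(\mb{x})$.} Fix $\mb{x}\in\R^n$. I would split into two cases based on whether $L_\mb{g}h(\mb{x}) = \mb{0}$. If $L_\mb{g}h(\mb{x}) \neq \mb{0}$, then the map $\mb{u} \mapsto L_\mb{f}h(\mb{x}) + L_\mb{g}h(\mb{x})\mb{u}$ is unbounded above, and any $\mb{u}$ in the direction of $L_\mb{g}h(\mb{x})^\top$ with sufficiently large magnitude lies in $K_{\rm CBF}(\mb{x})$. If $L_\mb{g}h(\mb{x}) = \mb{0}$, then the supremum in \eqref{eq:cbf} equals $L_\mb{f}h(\mb{x})$, and the strict inequality in the CBF definition gives $L_\mb{f}h(\mb{x}) > -\alpha(h(\mb{x}))$, so in particular $L_\mb{f}h(\mb{x}) \geq -\alpha(h(\mb{x}))$, meaning every $\mb{u} \in \R^m$ lies in $K_{\rm CBF}(\mb{x})$. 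Either way, $K_{\rm CBF}(\mb{x})$ is non-empty.

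\textbf{Step 2: $h$ is a BF for the closed-loop system.} Let $\mb{k}:\R^n \to \R^m$ be any locally Lipschitz continuous controller with $\mb{k}(\mb{x}) \in K_{\rm CBF}(\mb{x})$ for all $\mb{x} \in \R^n$. By definition of $K_{\rm CBF}$ in \eqref{eqn:Kcbf}, this means
\begin{equation*}
L_\mb{f}h(\mb{x}) + L_\mb{g}h(\mb{x})\mb{k}(\mb{x}) \geq -\alpha(h(\mb{x}))
\end{equation*}
for all $\mb{x} \in \R^n$, with the same $\alpha \in \cal{K}_\infty^{\rm e}$ from the CBF definition. This is exactly the BF inequality \eqref{eq:bf} for the closed-loop vector field $\mb{f}(\mb{x}) + \mb{g}(\mb{x})\mb{k}(\mb{x})$, and the regularity hypotheses on $h$ (continuous differentiability and nonvanishing gradient on $\{h=0\}$) are inherited from the CBF assumption. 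Hence $h$ is a BF for \eqref{eq:closedloop1} on $\C$.

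The statement is thus really a definitional unpacking, so there is no deep obstacle. The only mildly subtle point, which I would make sure to flag, is the role of the strict inequality ``$>$'' in the CBF definition \eqref{eq:cbf} as opposed to the non-strict ``$\geq$'' in $K_{\rm CBF}$: the strict inequality is what ensures $K_{\rm CBF}(\mb{x})$ is non-empty even when $L_\mb{g}h(\mb{x}) = \mb{0}$ and the supremum is attained rather than infinite. After Step 1 and Step 2, an application of Theorem~\ref{thm:bfsafety} to the closed-loop system would yield forward invariance of $\C$, though that conclusion is not formally asserted in the theorem statement itself.
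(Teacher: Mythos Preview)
Your proof is correct and is the standard argument. Note, however, that the paper does not actually provide its own proof of this theorem: it is a background result cited from \cite{ames2017control} and stated without proof, so there is no paper-side argument to compare against. Your two-step unpacking (case split on $L_\mb{g}h(\mb{x})$ for non-emptiness, then direct verification of \eqref{eq:bf} from the definition of $K_{\rm CBF}$) is precisely how the result is typically established, and your observation about the role of the strict inequality in \eqref{eq:cbf} matches the paper's own Remark immediately following the theorem.
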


\begin{remark}
The strict inequality in \eqref{eq:cbf} serves two purposes. First, it ensures the set \eqref{eqn:Kcbf} is non-empty (as with a non-strict inequality in \eqref{eq:bf}, the supremum may hold with equality, but there may be no input such that the supremum is attained). Second, strictness enables proving optimization-based controllers using CBFs are locally Lipschitz continuous \cite{jankovic2018robust}.
\end{remark}

\subsection{Lyapunov Backstepping}
Consider now a nonlinear control-affine system of the form:
\begin{align}
    \dot{\mb{x}} &= \mb{f}_0(\mb{x}) + \mb{g}_0(\mb{x})\bs{\xi}, \label{eqn:xdot} \\
    \dot{\bs{\xi}} &= \mb{f}_1(\mb{x},\bs{\xi})+\mb{g}_1(\mb{x},\bs{\xi})\mb{u}, \label{eqn:xidot}
\end{align}
with $\mb{x}\in\R^n$, $\bs{\xi}\in\R^p$, and $\mb{u}\in\R^m$, and functions $\mb{f}_0:\R^n\to\R^n$, $\mb{g}_0:\R^n\to\R^{n\times p}$, $\mb{f}_1:\R^n\times\R^p\to\R^p$, and $\mb{g}_1:\R^n\times\R^p\to\R^{p\times m}$ assumed to be locally Lipschitz continuous on their respective domains. This system is referred to as being in \textit{strict-feedback form}. We further assume that $\mb{f}_0(\mb{0})=\mb{f}_1(\mb{0},\mb{0})=\mb{0}$ and $\mb{g}_1$ is pseudo-invertible on $\R^n\times\R^p$. As before, given a locally Lipschitz continuous feedback controller $\mb{k}:\R^n\times\R^p\to\R^m$ yielding the closed-loop system:
\begin{align}
    \dot{\mb{x}} &= \mb{f}_0(\mb{x}) + \mb{g}_0(\mb{x})\bs{\xi}, \label{eqn:xdotcl} \\
    \dot{\bs{\xi}} &= \mb{f}_1(\mb{x},\bs{\xi})+\mb{g}_1(\mb{x},\bs{\xi})\mb{k}(\mb{x},\bs{\xi}), \label{eqn:xidotcl}
\end{align}
for any initial condition $(\mb{x}_0,\bs{\xi}_0)\in\R^n\times\R^p$ there exists a maximum time interval $I((\mb{x}_0,\bs{\xi}_0))\subseteq\R_{\geq 0}$ and a unique solution denoted by $\bs{\varphi}=(\bs{\varphi}_{\mb{x}},\bs{\varphi}_{\bs{\xi}})$ satisfying \eqref{eqn:soldiff}-\eqref{eqn:solic}
$\forall t\in I((\mb{x}_0,\bs{\xi}_0))$.

Suppose there exist a function $V_0:\R^n\to\R_{\geq 0}$ and a function $\kdV:\R^n\to\R^p$, both twice-continuously differentiable on $\R^n$, and $\gamma_1,\gamma_2,\gamma_3\in\mathcal{K}_{\infty}$ such that $\mb{k}_0(\mb{0})=\mb{0}$ and:
\begin{align}
    \gamma_1(\Vert\mb{x}\Vert_2) \leq V_0(\mb{x}) &\leq \gamma_2(\Vert \mb{x} \Vert_2), \label{eqn:V0bds} \\
    \derp{V_0}{\mb{x}}(\mb{x})(\mb{f}_0(\mb{x})+\mb{g}_0(\mb{x})\mb{k}_0(\mb{x})) &\leq -\gamma_3(\Vert\mb{x}\Vert_2), \label{eq:stability_high_level}
\end{align}
for all $\mb{x}\in\R^n$. The function $\kdV$ reflects a stabilizing controller that we would implement for the system \eqref{eqn:xdot} if we could directly control $\bs{\xi}$. As we may only directly control $\mb{u}$, we must \emph{backstep} through the state $\bs{\xi}$ to access $\mb{u}$. More precisely, consider a function $V:\R^n\times\R^p\to\R_{\geq 0}$ defined as:
\begin{equation}
\label{eqn:Vcomp}
    V(\mb{x},\bs{\xi}) = V_0(\mb{x}) + \frac{1}{2 \mu} (\bs{\xi} - \kdV)^\top (\bs{\xi} - \kdV),
\end{equation}
where $\mu\in\R_{> 0}$. We note there exists $\gamma_1',\gamma_2'\in\mathcal{K}_{\infty}$ such that:
\begin{align}
     \gamma_1(\Vert\mb{x}\Vert_2)+\gamma_1'(\Vert\bs{\xi} - \kdV\Vert_2) \leq V(\mb{x},\bs{\xi}), \\ V(\mb{x},\bs{\xi}) \leq \gamma_2(\Vert\mb{x}\Vert_2)+\gamma_2'(\Vert\bs{\xi} - \kdV\Vert_2),
\end{align}
for all $\mb{x}\in\R^n$ and $\bs{\xi}\in\R^p$. The time derivative of $V$ is:
\begin{align}
    \dot{V}(\mb{x},\bs{\xi},\mb{u}) = ~&
    \dVdx \big( \dotx \big) \label{eqn:Vcomp_dt}\\
    &+ \frac{1}{\mu} (\bs{\xi} - \kdV)^\top \bigg( \dotxi \nonumber \\
    &\qquad\qquad - \dkVdx \big( \dotx \big) \bigg). \nonumber
\end{align}
Using a locally Lipschitz continuous feedback controller $\mb{k}:\R^n\times\R^p\to\R^m$ defined as:
\begin{align}
    \hspace{-0.4 mm}\mb{k}(\mb{x},\bs{\xi}) = &~
    \mb{g}_1(\mb{x},\bs{\xi})^\dagger
    \bigg( - \mb{f}_1(\mb{x},\bs{\xi})
    + \dkVdx \big( \dotx \big) \nonumber\\
    & - \mu \left(\dVdx \mb{g}_0(\mb{x}) \right)^\top
    - \frac{\lambda}{2} (\bs{\xi} - \kdV) \bigg),
\end{align}
with $\lambda\in\R_{\geq 0}$ yields:
\begin{align} \label{eqn:dotVgammas}
    \dot{V}(\mb{x},\bs{\xi},\mb{k}(\mb{x},\bs{\xi})) = &~ \dVdx(\mb{f}(\mb{x})+\mb{g}(\mb{x})\kdV) \\ &-\frac{\lambda}{2 \mu}
    (\bs{\xi} - \kdV)^\top (\bs{\xi} - \kdV), \nonumber \\
    \leq &~-\gamma_3(\Vert\mb{x}\Vert_2)-\gamma_3'(\Vert
    \bs{\xi} - \kdV\Vert_2), 
\end{align}
for $\gamma_3'\in\mathcal{K}_\infty$ defined as $\gamma_3'(s) \triangleq \lambda/(2\mu)s^2$. Hence $V$ is a Lyapunov function for \eqref{eqn:xdotcl}-\eqref{eqn:xidotcl}, such that $I((\mb{x}_0,\bs{\xi}_0))=[0,\infty)$ for all $(\mb{x}_0,\bs{\xi}_0)\in\R^n\times\R^p$, and $\bs{\varphi}_{\mb{x}}(t)\to\mb{0}$ and $\bs{\varphi}_{\bs{\xi}}(t)-\mb{k}_0(\bs{\varphi}_{\mb{x}}(t))\to\mb{0}$ as $t\to\infty$. Furthermore, we have:
\begin{equation}
    \inf_{\mb{u}\in\R^m} \dot{V}(\mb{x},\bs{\xi},\mb{u}) < -c\gamma_3(\Vert\mb{x}\Vert_2)-c\gamma_3'(\Vert
    \bs{\xi} - \kdV\Vert_2),
\end{equation}
for all $\mb{x}\neq\mb{0}$ and $\bs{\xi}\neq\kdV$, where $c\in(0,1)$, such that $V$ is a Control Lyapunov Function (CLF) \cite{jankovic2018robust}. This enables a convex optimization-based controller defined as follows:
\begin{align}
    \mb{k}(\mb{x},\bs{\xi}) & =  \argmin_{\mb{u}\in\R^m} \frac{1}{2}\Vert \mb{u} \Vert_2^2 \\ & \textrm{s.t.}~ \dot{V}(\mb{x},\bs{\xi},\mb{u}) \leq -c\gamma_3(\Vert\mb{x}\Vert_2)-c\gamma_3'(\Vert
    \bs{\xi} - \kdV\Vert_2), \nonumber
\end{align}
that stabilizes \eqref{eqn:xdotcl}-\eqref{eqn:xidotcl} and is locally Lipschitz continuous on $(\R^n\times\R^p)\setminus\{\mb{0}\}$ if $\gamma_3$ is locally Lipschitz continuous \cite{jankovic2018robust}.

\section{Control Barrier Function Backstepping}
\label{sec:barbackstep}
In this section we explore how Control Barrier Functions can be used to achieve safety for the cascaded system in \eqref{eqn:xdot}-\eqref{eqn:xidot} when one must backstep through the state $\bs{\xi}$.

Suppose there exists a set $\C_0\subset\R^n$ defined as the 0-superlevel set of a twice-continuously differentiable function $h_0:\R^n\to\R$:
\begin{equation}
    \C_0 =\{\mb{x}\in\R^n ~|~ h_0(\mb{x})\geq 0\},
    \label{eq:safeset}
\end{equation}
that we wish to keep safe. We further assume that $\derp{h_0}{\mb{x}}(\mb{x})\neq\mb{0}$ when $h_0(\mb{x})=0$. As the input $\mb{u}$ does not show up in the time derivative of $h_0$, we may not directly apply the Control Barrier Function methodology established in Section \ref{sec:background}. Instead, motivated by the Lyapunov setting, we take a backstepping approach using CBFs. In particular, suppose there exists a twice-continuously differentiable function $\kdh:\R^n\to\R^p$ and a function $\alpha_0\in\mathcal{K}_{\infty}^e$ such that:
\begin{equation}
    \derp{h_0}{\mb{x}}(\mb{x})\left(\mb{f}_0(\mb{x})+\mb{g}_0(\mb{x})\kdh \right)\geq -\alpha_0(h_0(\mb{x})).
    \label{eq:safety_high_level}
\end{equation}
As before, $\kdh$ reflects a controller that renders $\C_0$ safe that we would implement for the system \eqref{eqn:xdot} if we could directly control $\bs{\xi}$. Let us consider a twice-continuously differentiable function $h:\R^n\times\R^p \to \R$ defined as:
\begin{equation}
\label{eqn:hcomp}
    h(\mb{x},\bs{\xi}) = h_0(\mb{x}) - \frac{1}{2\mu}(\bs{\xi}-\kdh)^\top(\bs{\xi}-\kdh),
\end{equation}
with $\mu\in\R_{>0}$. We note that instead of adding the quadratic error term as we did in \eqref{eqn:Vcomp}, we have subtracted it.
Let us define the set $\C\subset\R^n\times\R^p$ as the 0-superlevel set of the function $h$:
\begin{equation}
\label{eqn:Ccomp}
    \C = \{(\mb{x},\bs{\xi})\in\R^n\times\R^p ~|~ h(\mb{x},\bs{\xi})\geq 0\},
\end{equation}
noting that $\C\subset\C_0\times\R^p$.
This enables the following theorem:

\begin{theorem}
Let $\C_0$ be the 0-superlevel set of a twice-continuously differentiable function $h_0:\R^n\to\R$ with $\derp{h_0}{\mb{x}}(\mb{x}) \neq \mb{0}$ when $h_0(\mb{x}) = 0$. If there exists a twice-continuously differentiable function $\kdh:\R^n\to\R^p$ and a globally Lipschitz\footnote{We note this assumption permits linear extended class $\mathcal{K}$ functions, i.e, $\alpha_0(r) = kr$ for some $k\in\R_{>0}$, which are often used in practice} function $\alpha_0\in\mathcal{K}_{\infty}^e$ such that \eqref{eq:safety_high_level} holds, then there exists a locally Lipschitz continuous controller $\mb{k}:\R^n\times\R^p\to\R^p$ such that the function $h$ defined in \eqref{eqn:hcomp} is a Barrier Function for the closed-loop system \eqref{eqn:xdotcl}-\eqref{eqn:xidotcl} on the set $\C$ defined in \eqref{eqn:Ccomp}. Moreover, if $(\mb{x}_0,\bs{\xi}_0)\in\C$, then $\bs{\varphi}_{\mb{x}}(t)\in\C_0$ for all $t\in I((\mb{x}_0,\bs{\xi}_0))$.
\end{theorem}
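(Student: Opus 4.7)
The plan is to mirror the Lyapunov backstepping construction of Section \ref{sec:background}, with sign adjustments reflecting that we want to lower-bound $\dot{h}$ rather than upper-bound $\dot{V}$. I would first posit an explicit feedback controller engineered to cancel the cross terms in $\dot{h}$, then combine the top-level safety inequality \eqref{eq:safety_high_level} with the Lipschitz hypothesis on $\alpha_0$ to verify the BF inequality for $h$.

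Concretely, letting $L>0$ denote the global Lipschitz constant of $\alpha_0$, I would take
\begin{equation*}
\mb{k}(\mb{x},\bs{\xi}) = \mb{g}_1(\mb{x},\bs{\xi})^\dagger \left(-\mb{f}_1(\mb{x},\bs{\xi}) + \dkhdx\bigl(\dotx\bigr) + \mu\bigl(\dhdx \mb{g}_0(\mb{x})\bigr)^\top - \tfrac{L}{2}(\bs{\xi}-\kdh) \right),
\end{equation*}
which has the same structure as the Lyapunov backstepping controller but with the two middle terms carrying flipped signs. Differentiating \eqref{eqn:hcomp} along \eqref{eqn:xdotcl}--\eqref{eqn:xidotcl}, decomposing $\dot{\mb{x}} = \mb{f}_0(\mb{x}) + \mb{g}_0(\mb{x})\kdh + \mb{g}_0(\mb{x})(\bs{\xi}-\kdh)$, and using the $\mu(\dhdx\mb{g}_0(\mb{x}))^\top$ term of $\mb{k}$ to cancel the cross product $\dhdx \mb{g}_0(\mb{x})(\bs{\xi}-\kdh)$ should leave
\begin{equation*}
\dot{h}(\mb{x},\bs{\xi},\mb{k}(\mb{x},\bs{\xi})) = \dhdx\bigl(\mb{f}_0(\mb{x})+\mb{g}_0(\mb{x})\kdh\bigr) + \tfrac{L}{2\mu}\|\bs{\xi}-\kdh\|_2^2.
\end{equation*}
Setting $e \defeq \tfrac{1}{2\mu}\|\bs{\xi}-\kdh\|_2^2 \geq 0$ so that $h_0(\mb{x}) = h(\mb{x},\bs{\xi}) + e$ by \eqref{eqn:hcomp}, and invoking \eqref{eq:safety_high_level}, yields the intermediate bound $\dot{h} \geq -\alpha_0(h+e) + Le$.

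The crux of the argument---and the step I expect to be most delicate---is converting this intermediate bound into the BF inequality $\dot{h} \geq -\alpha_0(h)$. This is precisely where the global Lipschitz hypothesis on $\alpha_0$ enters: monotonicity combined with Lipschitz continuity give $0 \leq \alpha_0(h+e) - \alpha_0(h) \leq Le$, so $-\alpha_0(h+e) \geq -\alpha_0(h) - Le$, and the surplus $Le$ produced by the controller cancels this penalty exactly, producing $\dot{h} \geq -\alpha_0(h)$. Hence $h$ is a BF on $\C$ with class-$\mathcal{K}_\infty^{\rm e}$ comparison function $\alpha_0$. Local Lipschitz continuity of $\mb{k}$ follows by composing locally Lipschitz ingredients, since $h_0$ and $\kdh$ are twice continuously differentiable and $\mb{g}_1$ is pseudo-invertible. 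Theorem \ref{thm:bfsafety} then delivers forward invariance of $\C$ under the closed loop. The ``moreover'' claim is immediate: $h(\mb{x},\bs{\xi})\geq 0$ forces $h_0(\mb{x}) \geq \tfrac{1}{2\mu}\|\bs{\xi}-\kdh\|_2^2 \geq 0$, so $\C \subset \C_0 \times \R^p$ and consequently $\bs{\varphi}_{\mb{x}}(t) \in \C_0$ for all $t \in I((\mb{x}_0,\bs{\xi}_0))$.
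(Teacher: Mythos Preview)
Your proposal is correct and follows essentially the same route as the paper: the same backstepping controller (the paper allows a gain $\lambda\geq L$ while you fix $\lambda=L$), the same cancellation leading to $\dot h \geq -\alpha_0(h_0)+\tfrac{L}{2\mu}\|\bs{\xi}-\kdh\|_2^2$, and the same global-Lipschitz comparison to pass from $\alpha_0(h_0)$ to $\alpha_0(h)$. The one step you omit, which the paper includes and which is required by the BF definition before invoking Theorem~\ref{thm:bfsafety}, is verifying that $\nabla h(\mb{x},\bs{\xi})\neq\mb{0}$ whenever $h(\mb{x},\bs{\xi})=0$: from \eqref{eqn:hcomp}, $\partial h/\partial\bs{\xi}=\mb{0}$ forces $\bs{\xi}=\kdh$, so then $h=h_0$ and $\partial h/\partial\mb{x}=\partial h_0/\partial\mb{x}\neq\mb{0}$ by hypothesis.
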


\begin{proof}
We observe that:
\begin{equation}
\label{eqn:gradh}
    \begin{bmatrix} \derp{h}{\mb{x}}(\mb{x},\bs{\xi}) \\  \derp{h}{\bs{\xi}}(\mb{x},\bs{\xi}) \end{bmatrix} = \begin{bmatrix} \derp{h_0}{\mb{x}}(\mb{x}) +\frac{1}{\mu}(\bs{\xi}-\kdh)^\top\dkhdx\\ -\frac{1}{\mu}(\bs{\xi}-\kdh)^\top \end{bmatrix},
\end{equation}
from which we may conclude that if $\derp{h}{\bs{\xi}}(\mb{x},\bs{\xi}) = \mb{0}$ and $h(\mb{x},\bs{\xi})=0$, we must have $h_0(\mb{x}) = 0$, and thus $\derp{h}{\mb{x}}(\mb{x},\bs{\xi}) = \derp{h_0}{\mb{x}}(\mb{x}) \neq \mb{0}$ by assumption.
Furthermore, taking the time derivative of $h$ yields:
\begin{align}
    \dot{h}(\mb{x},\bs{\xi},\mb{u}) = ~&
    \dhdx \big( \dotx \big) \label{eqn:hcomp_dt}\\
    &- \frac{1}{\mu} (\bs{\xi} - \kdh)^\top \bigg( \dotxi \nonumber \\
    &\qquad\qquad - \dkhdx \big( \dotx \big) \bigg). \nonumber
\end{align}
Using a locally Lipschitz continuous feedback controller $\mb{k}:\R^n\times\R^p\to\R^m$ defined as:
\begin{align}
\label{eqn:kcbfcomp}
    \hspace{-0.4 mm} \mb{k}(\mb{x},\bs{\xi}) = &~
    \mb{g}_1(\mb{x},\bs{\xi})^\dagger
    \bigg( - \mb{f}_1(\mb{x},\bs{\xi})
    + \dkhdx \big( \dotx \big) \nonumber\\
    & + \mu \left(\dhdx \mb{g}_0(\mb{x}) \right)^\top
    - \frac{\lambda}{2} (\bs{\xi} - \kdh) \bigg),
\end{align}
with $\lambda\in\R_{\geq 0}$ yields:
\begin{align} \label{eqn:hdotalphas}
    \dot{h}(\mb{x},\bs{\xi},\mb{k}(\mb{x},\bs{\xi})) = ~& \dhdx(\mb{f}(\mb{x})+\mb{g}(\mb{x})\kdh) \\ &+\frac{\lambda}{2 \mu}
    (\bs{\xi} - \kdh)^\top (\bs{\xi} - \kdh), \nonumber \\
    \geq &~-\alpha_0(h_0(\mb{x})) +\frac{\lambda}{2 \mu}
    \|\bs{\xi} - \kdh\|^2_2.
\end{align}
Let $L$ be the Lipschitz constant of $\alpha_0$. Choosing $\lambda\geq L$, we have that:
\begin{align}
    \dot{h}(\mb{x},\bs{\xi},\mb{k}(\mb{x},\bs{\xi}))
    \geq &~-\alpha_0(h_0(\mb{x}))
    +\frac{L}{2 \mu} \|\bs{\xi} - \kdh\|^2_2,
    \label{eqn:proofsep}
\end{align}
and the global Lipschitz property of $\alpha_0$ yields that:
\begin{align}
    \bigg\vert \alpha_0\bigg(h_0(\mb{x})-\frac{1}{2 \mu} \|\bs{\xi} - \kdh\|^2 &\bigg)-\alpha_0(h_0(\mb{x}))\bigg\vert \nonumber \\
    &\leq \frac{L}{2 \mu} \|\bs{\xi} - \kdh\|^2_2. \label{eqn:lipproof}
\end{align}
\vspace{-0.3cm}
Noting the definition of \eqref{eqn:hcomp}, we may rearrange \eqref{eqn:lipproof} to yield:
\begin{equation}
\alpha_0(h(\mb{x},\bs{\xi})) \geq \alpha_0(h_0(\mb{x}))- \frac{L}{2 \mu} \| \bs{\xi} - \kdh \|^2_2.   
\end{equation}
Negating both sides of this expression and combining with \eqref{eqn:proofsep} allows us to conclude that:
\begin{equation}
\label{eqn:hdotprooffin}
    \dot{h}(\mb{x},\bs{\xi},\mb{k}(\mb{x},\bs{\xi})) \geq -\alpha_0(h(\mb{x},\bs{\xi})).
\end{equation}
Thus, $h$ is a BF for the closed-loop system \eqref{eqn:xdotcl}-\eqref{eqn:xidotcl} on the set $\C$. Hence, by Theorem \ref{thm:bfsafety} we may conclude the set $\C$ is safe, i.e.,
$(\mb{x}_0,\bs{\xi}_0)\in\C\implies \bs{\varphi}(t)\in \C \implies \bs{\varphi}_{\mb{x}}(t)\in\C_0$
for all $t\in I((\mb{x}_0,\bs{\xi}_0))$.
\end{proof}

\begin{remark}
The preceding result establishes the safety of the set $\C$, rather than the set $\C_0$. We do not necessarily have that $\mb{x}_0\in\C_0$ implies $\bs{\varphi}_{\mb{x}}(t)\in\C_0$ for all $t\in I((\mb{x}_0,\bs{\xi}_0))$. The further requirement on the initial condition $\bs{\xi}_0$ is expected, and appears in other results studying safety for higher-order systems \cite{nguyen2016exponential, xiao2019control, xiao2021high}.
\end{remark}

We now make the following observation.
Suppose that $h_0(\mb{x}^*) = 0$, $\bs{\xi}=\mb{k}_0(\mb{x}^*)$ and:
\begin{equation}
\derp{h_0}{\mb{x}}(\mb{x}^*)(\mb{f}_0(\mb{x}^*)+\mb{g}_0(\mb{x}^*)\mb{k}_0(\mb{x}^*)) = 0,
\end{equation}
for some $\mb{x}^*\in\C_0$. Then, we have that:
\begin{equation}
    \sup_{\mb{u}\in\R^m} \dot{h}(\mb{x}^*,\mb{k}_0(\mb{x}^*),\mb{u}) = 0 = -\alpha(h(\mb{x}^*,\mb{k}_0(\mb{x}^*))), 
\end{equation}
for any $\alpha\in\mathcal{K}_{\infty}^e$. Thus, we do not have that there exists an extended class $\mathcal{K}_\infty$ function $\alpha$ such that the strict inequality in \eqref{eq:cbf} is met, and hence we may not conclude that $h$ is a CBF for the system \eqref{eqn:xdot}-\eqref{eqn:xidot} on $\C$. The primary reason that $h$ is not a CBF lies in the fact that when $\bs{\xi}=\mb{k}_0(\mb{x}^*)$, the input does not have an effect on the time derivative of $h$. In this situation, the evolution of $h$ is entirely dependent on the design of the controller $\mb{k}_0$. Suppose that instead of \eqref{eq:safety_high_level}, we have that:
\begin{equation}
    \derp{h_0}{\mb{x}}(\mb{x})\left(\mb{f}_0(\mb{x})+\mb{g}_0(\mb{x})\mb{k}_0(\mb{x}) \right)> -\alpha_0(h_0(\mb{x})).
    \label{eq:safety_high_level_strict}
\end{equation}
Considering any $\mb{x}\in\R^n$ now, if $\bs{\xi}=\mb{k}_0(\mb{x})$, we have that:
\begin{multline}
    \dot{h}(\mb{x},\kdh,\mb{u}) > -\alpha_0(h_0(\mb{x})) =  -\alpha_0(h(\mb{x},\mb{k}_0(\mb{x}))), 
\end{multline}
for all $\mb{u}\in\R^m$. Noting that if $\bs{\xi}\neq\mb{k}_0(\mb{x})$, $\dot{h}$ can be made arbitrarily large through input, we may conclude that:
\begin{equation}
\label{eqn:hdotsup}
    \sup_{\mb{u}\in\R^m} \dot{h}(\mb{x},\bs{\xi},\mb{u}) > -\alpha_0(h(\mb{x},\bs{\xi})).
\end{equation}
This is summarized in the following theorem:
\begin{theorem}
\label{thm:cbfbackstep}
Let $\C_0$ be the 0-superlevel set of a twice-continuously differentiable function $h_0:\R^n\to\R$ with $\derp{h_0}{\mb{x}}(\mb{x}) \neq \mb{0}$ when $h_0(\mb{x}) = 0$. If there exists a twice-continuously differentiable function $\kdh:\R^n\to\R^p$ and a function $\alpha_0\in\mathcal{K}_{\infty}^e$ such that \eqref{eq:safety_high_level_strict} holds, then the function $h$ defined in \eqref{eqn:hcomp} is a Control Barrier Function for the system \eqref{eqn:xdot}-\eqref{eqn:xidot} on the set $\C$ defined in \eqref{eqn:Ccomp}.
\end{theorem}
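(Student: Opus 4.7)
The plan is to verify the two defining properties of a CBF for $h$ on $\C$: the non-vanishing gradient of $h$ on its zero level set, and the strict supremum inequality \eqref{eq:cbf}. The first property is essentially inherited from the proof of the previous theorem: by \eqref{eqn:gradh}, if $h(\mb{x},\bs{\xi}) = 0$ and $\derp{h}{\bs{\xi}}(\mb{x},\bs{\xi}) = \mb{0}$, then $\bs{\xi} = \kdh$, which forces $h_0(\mb{x}) = 0$ and hence $\derp{h}{\mb{x}}(\mb{x},\bs{\xi}) = \derp{h_0}{\mb{x}}(\mb{x}) \neq \mb{0}$ by hypothesis. So I would simply cite that step and move on.

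For the supremum inequality, I would perform a case split on whether $\bs{\xi}$ equals $\kdh$. In the case $\bs{\xi} = \kdh$, the quadratic correction in \eqref{eqn:hcomp} vanishes, giving $h(\mb{x},\kdh) = h_0(\mb{x})$, and the $(\bs{\xi}-\kdh)^\top$ factor in \eqref{eqn:hcomp_dt} annihilates the second bracket. Therefore $\dot{h}(\mb{x},\kdh,\mb{u}) = \derp{h_0}{\mb{x}}(\mb{x})(\mb{f}_0(\mb{x})+\mb{g}_0(\mb{x})\kdh)$ independent of $\mb{u}$, and the strict hypothesis \eqref{eq:safety_high_level_strict} then gives this quantity strictly greater than $-\alpha_0(h_0(\mb{x})) = -\alpha_0(h(\mb{x},\kdh))$. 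The supremum over $\mb{u}$ therefore satisfies the required strict inequality. This is exactly the computation previewed in the paragraph directly before the theorem statement, so I would essentially formalize that sketch.

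In the complementary case $\bs{\xi} \neq \kdh$, the $\mb{u}$-dependent part of $\dot{h}$ reduces to $-\tfrac{1}{\mu}(\bs{\xi}-\kdh)^\top \mb{g}_1(\mb{x},\bs{\xi})\mb{u}$. Pseudo-invertibility of $\mb{g}_1$ (understood as full row rank, so that $\mb{g}_1 \mb{g}_1^\dagger = I_p$) guarantees that $\mb{g}_1(\mb{x},\bs{\xi})^\top(\bs{\xi}-\kdh) \neq \mb{0}$ whenever $\bs{\xi} - \kdh \neq \mb{0}$. Taking $\mb{u}$ proportional to this vector with arbitrary magnitude sends $\dot{h}$ to $+\infty$, so $\sup_{\mb{u}\in\R^m} \dot{h}(\mb{x},\bs{\xi},\mb{u}) = +\infty > -\alpha_0(h(\mb{x},\bs{\xi}))$. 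Combining the two cases establishes \eqref{eqn:hdotsup} with $\alpha = \alpha_0$, completing the verification.

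The only subtle point, and the reason the strict version of the top-level hypothesis matters, is the degenerate surface $\{(\mb{x},\bs{\xi}) : \bs{\xi} = \kdh\}$ where the input is decoupled from $\dot h$. Off this surface the CBF condition is trivial because $\mb{u}$ can push $\dot h$ arbitrarily; on it, the condition must come for free from the top-level design, which is exactly what \eqref{eq:safety_high_level_strict} provides. I do not expect any genuine obstacle beyond this observation; the rest is routine manipulation already rehearsed in the preceding theorem's proof.
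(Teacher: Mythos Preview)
Your proposal is correct and follows essentially the same approach as the paper: the paper's argument, given in the paragraph immediately preceding the theorem, is precisely your case split on whether $\bs{\xi}=\kdh$ (where \eqref{eq:safety_high_level_strict} does the work) versus $\bs{\xi}\neq\kdh$ (where pseudo-invertibility of $\mb{g}_1$ makes $\dot h$ unbounded in $\mb{u}$). Your inclusion of the non-vanishing gradient check, carried over from \eqref{eqn:gradh}, is appropriate and matches how the paper treats it in the prior theorem.
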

Theorem \ref{thm:cbfbackstep} does not explicitly require the assumption of global Lipschitz continuity on $\alpha_0$, which was needed to achieve \eqref{eqn:hdotprooffin} when using the particular controller \eqref{eqn:kcbfcomp}. As CBFs are typically used in the context of control synthesis (beyond purely verification), we notice that \eqref{eqn:hdotsup} implies that:
\begin{equation}
    \sup_{\mb{u}\in\R^m}\dot{h}(\mb{x},\bs{\xi},\mb{u}) > -\alpha_1(h(\mb{x},\bs{\xi})),
\end{equation}
for any $\alpha_1\in\mathcal{K}_\infty^e$ such that $\alpha_1(s)\geq\alpha_0(s)$ for all $s\in\R$. Thus we may view $\alpha_1$ as an design parameter we may specify. For any such locally Lipschitz\footnote{Though it is not necessary for $\alpha_0$ to be locally Lipschitz continuous to imply the existence of such an $\alpha_1$, it is a sufficient condition.} $\alpha_1\in\mathcal{K}_\infty^e$ and any locally Lipschitz continuous $\mb{k}_{\rm d}:\R^n\times\R^p\to\R^m$, we can synthesize an optimization-based controller:
\begin{align}
\label{eqn:cbfbackstepqp}
    \mb{k}(\mb{x},\bs{\xi}) =  \argmin_{\mb{u}\in\R^m}&~ \frac{1}{2}\Vert \mb{u}-\mb{k}_{\rm d}(\mb{x},\bs{\xi}) \Vert_2^2 \\ \textrm{s.t.}&~ \dot{h}(\mb{x},\bs{\xi},\mb{u}) \geq -\alpha_1(h(\mb{x},\bs{\xi})), \nonumber
\end{align}
that is locally Lipschitz continuous on $\R^n\times\R^p$ \cite{jankovic2018robust} and renders $h$ a BF for \eqref{eqn:xdotcl}-\eqref{eqn:xidotcl} on $\C$.

\section{Multi-Step CBF Backstepping}
\label{sec:multibarbackstep}
In this section we extend the preceding CBF backstepping approach to higher-order mixed-relative degree systems via a recursive design process typical of backstepping.

Consider the nonlinear system\footnote{We do not notate a closed-loop system, but assume it is understood that when we refer to this system as closed-loop, it is operating under a controller.}  in strict feedback form:
\begin{subequations}\label{eq:cascade_many}
\begin{align}
 \dot{\bs{\xi}}_0 &= \mb{f}_0(\bs{\xi}_0)+\mb{g}_{0,\bs{\xi}}(\bs{\xi}_0)\bs{\xi}_1+\mb{g}_{0,\mb{u}}(\bs{\xi}_0) \mb{u}_0, \\
\dot{\bs{\xi}}_1 &= \mb{f}_1(\bs{\xi}_0,\bs{\xi}_1)+\mb{g}_{1,\bs{\xi}}(\bs{\xi}_0,\bs{\xi}_1)\bs{\xi}_2+\mb{g}_{1,\mb{u}}(\bs{\xi}_0,\bs{\xi}_1) \mb{u}_1, \\
\vdots & \nonumber\\
\dot{\bs{ \xi}}_r &= \mb{f}_r(\bs{\xi}_0,\bs{\xi}_1,\bs{\xi}_2,\dots \bs{\xi}_r)+\mb{g}_{r}(\bs{\xi}_0,\bs{\xi}_1,\dots,\bs{\xi}_r)\mb{u}_{r},
\end{align}
\end{subequations}
with states $\bs{\xi}_i\in\R^{p_i}$ and inputs $\mb{u}_i\in\R^{m_i}$ for $i=0,\ldots,r$. The functions $\mb{f}_i$, $\mb{g}_{i,\mb{u}}$ for $i=0,\ldots,r$ and $\mb{g}_{i,\bs{\xi}}$ for $i=0,\ldots,r-1$ are assumed to be smooth on their respective domains. We further assume that the functions $\mb{g}_i=(\mb{g}_{i,\bs{\xi}},\mb{g}_{i,\mb{u}})$ for $i=1,\ldots,r-1$ and the function $\mb{g}_{r}$ are pseudo-invertible on their respective domains. Let us denote $q_i\triangleq \sum_{j=0}^{i} p_j$,
$M_r\triangleq\sum_{j=0}^{r} m_j$, and
$\mb{z}_i \triangleq (\bs{\xi}_0,\bs{\xi}_1,\dots,\bs{\xi}_i) \in \R^{q_i}$ for $i=0,\ldots,r$. We seek to construct a controller $\map{\mb{k}}{\R^{q_r}}{\R^{M_r}}$ such that setting $\mb{u}=(\mb{u}_0,\dots,\mb{u}_{r})=\mb{k}(\mb{z}_r)$ achieves safety.

Suppose the set $\C_0$ is defined as the 0-superlevel set of a smooth function $h_0:\R^{q_0}\to\R$ as in \eqref{eq:safeset}, with $\derp{h_0}{\bs{\xi}_0}(\mb{z}_0)\neq\mb{0}$ when $h_0(\mb{z}_0) = 0$. Let smooth functions $\mb{k}_{0,\bs{\xi}}:\R^{q_0}\to\R^{p_1}$ and $\mb{k}_{0,\mb{u}}:\R^{q_0}\to\R^{m_0}$, and a globally Lipschitz continuous function $\alpha_0\in\mathcal{K}_\infty^e$ with Lipschitz constant $L$ satisfy:
\begin{align}
\label{eqn:multisteptopsafe}
    \frac{\partial h_0}{\partial \bs{\xi}_0}(\mb{z}_0)\big(\mb{f}_0(\mb{z}_0)&+\mb{g}_{0,\bs{\xi}}(\mb{z}_0)\mb{k}_{0,\bs{\xi}}(\mb{z}_0)\\& +\mb{g}_{0,\mb{u}}(\mb{z}_0)\mb{k}_{0,\mb{u}}(\mb{z}_0)\big) \geq - \alpha_0(h_0(\mb{z}_0)), \nonumber
\end{align}
for all $\mb{z}_0\in\R^{q_0}$. Consider smooth functions (to be defined) $\mb{k}_{i,\bs{\xi}}:\R^{q_i}\to\R^{p_{i+1}}$ for $i=1,\ldots,r-1$ and $\mb{k}_{i,\mb{u}}:\R^{q_i}\to\R^{m_i}$ for $i=1,\ldots,r$, and define the smooth function $h:\R^{q_r}\to\R$:
\begin{equation}
\label{eqn:hrecursion}
    h(\mb{z}_{r}) = h_{0}(\mb{z}_{0})-\sum_{i=1}^{r}\frac{1}{2\mu_{i}}\Vert\bs{\xi}_{i}-\mb{k}_{i-1,\bs{\xi}}(\mb{z}_{i-1})\Vert_2^2,
\end{equation}
with $\mu_{i}\in\R_{>0}$ for $i=1,\ldots,r$. Define the set $\C\subset\R^{q_r}$ as:
\begin{equation}
\label{eqn:Cmanystep}
    \C = \{\mb{z}_r\in\R^{q_r} ~|~ h(\mb{z}_r) \geq 0\},
\end{equation}
noting that $\C\subseteq\C_0\times\R^{p_1}\times\cdots\times\R^{p_r}$. Given this construction, we have the following result:
\begin{theorem}
Let $\C_0$ be the 0-superlevel set of smooth function $h_0:\R^{p_0}\to\R$ with $\derp{h_0}{\bs{\xi}_0}(\mb{z}_0) \neq \mb{0}$ when $h_0(\mb{z}_0) = 0$. If there exist smooth functions $\mb{k}_{0,\bs{\xi}}:\R^{p_0}\to\R^{p_1}$ and $\mb{k}_{0,\mb{u}}:\R^{p_0}\to\R^{m_0}$ and a globally Lipschitz function $\alpha_0\in\mathcal{K}_{\infty}^e$ such that \eqref{eqn:multisteptopsafe} holds, then there exists a smooth controller $\mb{k}:\R^{q_r}\to\R^{M_r}$ and functions $\mb{k}_{i,\bs{\xi}}:\R^{q_i}\to\R^{p_{i+1}}$ for $i=1,\ldots,r-1$ such that the function $h:\R^{q_r}\to\R$ defined in \eqref{eqn:hrecursion} is a Barrier Function for the closed-loop system \eqref{eq:cascade_many} on the set $\C$ defined in \eqref{eqn:Cmanystep}. Moreover, if the initial condition $\mb{z}_{r,0}\in\C$, then $\bs{\varphi}_{\bs{\xi}_0}(t)\in\C_0$ for all $t\in I(\mb{z}_{r,0})$.
\end{theorem}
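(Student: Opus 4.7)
The plan is to prove the theorem by induction on the cascade depth, applying at each level the single-cascade CBF backstepping construction (the argument culminating in \eqref{eqn:kcbfcomp}-\eqref{eqn:hdotprooffin}) to propagate a safety inequality governed by the same class-$\mathcal{K}_\infty^e$ function $\alpha_0$ down the cascade. The crucial invariant is that the single-cascade construction preserves the class-$\mathcal{K}_\infty^e$ function unchanged (since \eqref{eqn:lipproof} absorbs the perturbation from the quadratic error), which is what permits iterating through $r$ levels without degrading $\alpha_0$.

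First I would define $h_i(\mb{z}_i) = h_{i-1}(\mb{z}_{i-1}) - \frac{1}{2\mu_i}\|\bs{\xi}_i - \mb{k}_{i-1,\bs{\xi}}(\mb{z}_{i-1})\|_2^2$ for $i = 1, \ldots, r$ so that $h = h_r$, and formulate the inductive claim: for each $i \in \{0, \ldots, r\}$, there exist smooth virtual controllers $\mb{k}_{j,\bs{\xi}}$ for $j = 1, \ldots, \min(i,r-1)$ (with $\mb{k}_{0,\bs{\xi}}$ and $\mb{k}_{0,\mb{u}}$ given) and smooth input components for $\mb{u}_0, \ldots, \mb{u}_i$, viewed as functions of $\mb{z}_i$, such that when applied together with the virtual assignment $\bs{\xi}_{i+1} = \mb{k}_{i,\bs{\xi}}(\mb{z}_i)$ (for $i < r$), the derivative of $h_i$ along the level-$i$ dynamics satisfies $\dot{h}_i \geq -\alpha_0(h_i(\mb{z}_i))$. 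The base case $i=0$ is exactly the hypothesis \eqref{eqn:multisteptopsafe}.

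For the inductive step $i \to i+1$, I would differentiate $h_{i+1}$ along the closed-loop dynamics through level $i+1$. As in \eqref{eqn:hcomp_dt}, this splits into the level-$i$ derivative $\dot{h}_i$ (now with $\bs{\xi}_{i+1}$ playing the role that the prior virtual input $\mb{k}_{i,\bs{\xi}}$ did) plus a cross term from differentiating the squared error. Using the pseudo-invertibility of $\mb{g}_{i+1}=(\mb{g}_{i+1,\bs{\xi}},\mb{g}_{i+1,\mb{u}})$ (or of $\mb{g}_r$ at the final level, where no further virtual input exists), I would simultaneously solve for $\mb{k}_{i+1,\bs{\xi}}$ and the direct input $\mb{u}_{i+1}$ via the analogue of \eqref{eqn:kcbfcomp}, obtaining an inequality of the form $\dot{h}_{i+1} \geq -\alpha_0(h_i) + \frac{\lambda_{i+1}}{2\mu_{i+1}}\|\bs{\xi}_{i+1} - \mb{k}_{i,\bs{\xi}}\|_2^2$ that mirrors \eqref{eqn:proofsep}. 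Choosing $\lambda_{i+1} \geq L$ and invoking the global Lipschitz property of $\alpha_0$ exactly as in \eqref{eqn:lipproof} then yields $\dot{h}_{i+1} \geq -\alpha_0(h_{i+1}(\mb{z}_{i+1}))$, closing the induction.

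After $r$ steps, the composite controller $\mb{k}:\R^{q_r}\to\R^{M_r}$ is assembled from the designed $\mb{u}_0,\ldots,\mb{u}_r$ and is smooth as a composition of smooth dynamics, smooth virtual controllers, and smooth pseudo-inverses of full-rank $\mb{g}$-matrices. Thus $h = h_r$ is a Barrier Function for \eqref{eq:cascade_many} on $\C$, and Theorem \ref{thm:bfsafety} delivers forward invariance of $\C$; since the recursive construction ensures $\C \subseteq \C_0 \times \R^{p_1} \times \cdots \times \R^{p_r}$, any $\mb{z}_{r,0} \in \C$ produces a trajectory with $\bs{\varphi}_{\bs{\xi}_0}(t) \in \C_0$ throughout its interval of existence. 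The main obstacle is bookkeeping the regularity of the iteratively designed virtual controllers: each $\mb{k}_{i,\bs{\xi}}$ is differentiated once to form $h_{i+1}$ and again when computing the closed-loop $\dot{h}_{i+1}$ one level deeper, so the smoothness assumption must be faithfully propagated through $r$ layers of differentiation—this is precisely why the regularity hypotheses are strengthened from the twice-continuously differentiable setting of the single-cascade case to full smoothness here.
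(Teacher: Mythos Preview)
Your inductive organization is correct and is essentially the same argument as the paper's, just packaged differently: the paper writes out explicit recursive formulas for the $\mb{k}_{i,\bs{\xi}}$ and $\mb{k}_{i,\mb{u}}$ and then computes $\dot h$ for the full $h$ in one shot, obtaining $\dot h(\mb{z}_r,\mb{k}(\mb{z}_r)) \geq -\alpha_0(h_0(\mb{z}_0)) + \sum_{i=1}^r \tfrac{\lambda_i}{2\mu_i}\|\bs{\xi}_i-\mb{k}_{i-1,\bs{\xi}}(\mb{z}_{i-1})\|_2^2$, and applies the global Lipschitz estimate \emph{once} to pass from $-\alpha_0(h_0)$ to $-\alpha_0(h)$. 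Your version invokes the Lipschitz bound $r$ times, once per level, which is equally valid and yields the same requirement $\lambda_i\geq L$. The underlying controller structure (pseudo-inverse cancellation of drift, feedforward of $\dot{\mb{k}}_{i-1,\bs{\xi}}$, cross-term cancellation, and quadratic damping) is identical.

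There is one genuine omission: you never verify that $\derp{h}{\mb{z}_r}(\mb{z}_r)\neq\mb{0}$ when $h(\mb{z}_r)=0$, which is part of the Barrier Function definition and is required to invoke Theorem~\ref{thm:bfsafety}. The paper handles this by a backward-recursive argument: if $\derp{h}{\bs{\xi}_i}=\mb{0}$ for all $i\geq 1$, then successively $\bs{\xi}_i=\mb{k}_{i-1,\bs{\xi}}(\mb{z}_{i-1})$, which collapses $h$ to $h_0$ and $\derp{h}{\bs{\xi}_0}$ to $\derp{h_0}{\bs{\xi}_0}\neq\mb{0}$. This step folds naturally into your induction (show $\derp{h_i}{\mb{z}_i}\neq\mb{0}$ when $h_i=0$ at each level), but it needs to be stated.
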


\begin{proof}
We observe that:
\begin{align}
    \derp{h}{\bs{\xi}_0}(\mb{z}_r) = & \derp{h_0}{\bs{\xi}_0}(\mb{z}_0)  \\ &
    +\sum_{j=1}^{r}\frac{1}{\mu_{j}}(\bs{\xi}_{j}-\mb{k}_{j-1,\bs{\xi}}(\mb{z}_{j-1}))^\top\derp{\mb{k}_{j-1,\bs{\xi}}}{\bs{\xi}_0}(\mb{z}_{j-1}), \nonumber
\end{align}
and for $i\in\{1,\ldots,r\}$, we have that:
\begin{align}
    \derp{h}{\bs{\xi}_i}(\mb{z}_r) =&~ -\frac{1}{\mu_i}(\bs{\xi}_i-\mb{k}_{i-1,\bs{\xi}}(\mb{z}_{i-1}))^\top \\
    &~+\sum_{j=i+1}^{r}\frac{1}{\mu_j}(\bs{\xi}_{j}-\mb{k}_{j-1,\bs{\xi}}(\mb{z}_{j-1}))^\top\derp{\mb{k}_{j-1,\bs{\xi}}}{\bs{\xi}_{i}}(\mb{z}_{j-1}). \nonumber
\end{align}
We can see recursively (backwards) that if $\derp{h}{\bs{\xi}_i}(\mb{z}_r) = \mb{0}$ for $i=1,\ldots,r$, then we must have $\bs{\xi}_i = \mb{k}_{i-1,\bs{\xi}}(\mb{z}_{i-1})$ for $i=1,\ldots,r$, and thus $h(\mb{z}_r) = h_0(\bs{\xi}_0)$ and $\derp{h}{\bs{\xi}_0}(\mb{z}_r) = \derp{h_0}{\bs{\xi}_0}(\mb{z}_0)$. As $\derp{h_0}{\bs{\xi}_0}(\mb{z}_0)\neq\mb{0}$ when $h_0(\mb{z}_0) = 0$, we have that $\derp{h}{\bs{\xi}_0}(\mb{z}_r)\neq\mb{0}$ when $h(\mb{z}_r) = 0$, such that $\derp{h}{\mb{z}_r}(\mb{z}_r)\neq\mb{0}$ when $h(\mb{z}_r) = 0$.

Using $\mb{k}_{0,\bs{\xi}}$ and $\mb{k}_{0,\mb{u}}$, we define the smooth functions:
\begin{align}
    &\begin{bmatrix}\mb{k}_{1,\bs{\xi}}(\mb{z}_1) \\ \mb{k}_{1,\mb{u}}(\mb{z}_1) \end{bmatrix} = \mb{g}_1(\mb{z}_1)^\dagger\bigg(-\mb{f}_1(\mb{z}_1) + \mu_0\bigg(\derp{h_0}{\bs{\xi}_0}(\mb{z}_0)\mb{g}_{0,\bs{\xi}}(\mb{z}_0) \bigg)^\top \nonumber \\& +\derp{\mb{k}_{0,\bs{\xi}}}{\bs{\xi}_0}(\mb{z}_0)\big(\mb{f}_0(\mb{z}_0)+\mb{g}_{0,\bs{\xi}}(\mb{z}_0)\bs{\xi}_1 +\mb{g}_{0,\mb{u}}(\mb{z}_0)\mb{k}_{0,\mb{u}}(\mb{z}_0)\big) \nonumber \\ &-\frac{\lambda_1}{2}(\bs{\xi}_1-\mb{k}_{0,\bs{\xi}}(\mb{z}_0)) \bigg).
\end{align}
For $i=2,\ldots,r-1$, we recursively define the smooth functions:
\begin{align}
    \begin{bmatrix}\mb{k}_{i,\bs{\xi}}(\mb{z}_i) \\ \mb{k}_{i,\mb{u}}(\mb{z}_i) \end{bmatrix}& = \mb{g}_i(\mb{z}_i)^\dagger\bigg(-\mb{f}_i(\mb{z}_i)  \\ &-\mu_{i}\mb{g}_{i-1,\bs{\xi}}(\mb{z}_{i-1})^\top(\bs{\xi}_{i-1}-\mb{k}_{i-2,\bs{\xi}}(\mb{z}_{i-2})) \nonumber \\& +\sum_{j=0}^{i-1}\derp{\mb{k}_{i-1,\bs{\xi}}}{\bs{\xi}_j}(\mb{z}_{i-1})\big(\mb{f}_j(\mb{z}_j)+\mb{g}_{j,\bs{\xi}}(\mb{z}_j)\bs{\xi}_{j+1}\nonumber \\ & +\mb{g}_{j,\mb{u}}(\mb{z}_j)\mb{k}_{j,\mb{u}}(\mb{z}_j) \big) -\frac{\lambda_i}{2}(\bs{\xi}_i-\mb{k}_{i-1,\bs{\xi}}(\mb{z}_{i-1})) \bigg), \nonumber
\end{align}
and lastly define the smooth function:
\begin{align}
\mb{k}_{r,\mb{u}}(\mb{z}_i)& = \mb{g}_r(\mb{z}_r)^\dagger\bigg(-\mb{f}_r(\mb{z}_r)  \\ &-\mu_{r}\mb{g}_{r-1,\bs{\xi}}(\mb{z}_{r-1})^\top(\bs{\xi}_{r-1}-\mb{k}_{r-2,\bs{\xi}}(\mb{z}_{r-2})) \nonumber \\& +\sum_{j=0}^{r-1}\derp{\mb{k}_{r-1,\bs{\xi}}}{\bs{\xi}_j}(\mb{z}_{r-1})\big(\mb{f}_j(\mb{z}_j)+\mb{g}_{j,\bs{\xi}}(\mb{z}_j)\bs{\xi}_{j+1}\nonumber \\ & +\mb{g}_{j,\mb{u}}(\mb{z}_j)\mb{k}_{j,\mb{u}}(\mb{z}_j) \big) -\frac{\lambda_r}{2}(\bs{\xi}_r-\mb{k}_{r-1,\bs{\xi}}(\mb{z}_{r-1})) \bigg), \nonumber
\end{align}
Letting the controller $\mb{k}:\R^{q_r}\to\R^{M_r}$ be defined as:
\begin{equation}
    \mb{k}(\mb{z}_r) = \begin{bmatrix} \mb{k}_{0,\mb{u}}(\mb{z}_0)^\top & \cdots & \mb{k}_{r,\mb{u}}(\mb{z}_r)^\top\end{bmatrix}^\top,
\end{equation}
a sequence of (laborious) calculations yields:
\begin{equation}
    \dot{h}(\mb{z}_r,\mb{k}(\mb{z}_r)) \geq -\alpha_0(h_0(\mb{z}_0)) + \sum_{i=1}^{r}\frac{\lambda_i}{2\mu_{i}}\Vert\bs{\xi}_{i}-\mb{k}_{i-1,\bs{\xi}}(\mb{z}_{i-1})\Vert_2^2. \nonumber
\end{equation}
Choosing $\lambda_i\geq L$ for $i=1,\ldots,r$ and following the same argument as in \eqref{eqn:hdotalphas}-\eqref{eqn:hdotprooffin}, we arrive at:
\begin{equation}
\dot{h}(\mb{z}_r,\mb{k}(\mb{z}_r)) \geq -\alpha_0(h(\mb{z}_r)).
\end{equation}
Thus, $h$ is a BF for the closed-loop system \eqref{eq:cascade_many} on the set $\C$. Hence, by Theorem \ref{thm:bfsafety} we may conclude the set $\C$ is safe, i.e.,
$\mb{z}_{r,0}\in\C\implies \bs{\varphi}(t)\in \C \implies \bs{\varphi}_{\bs{\xi}_0}(t)\in\C_0$. 
\end{proof}
If instead of \eqref{eqn:multisteptopsafe} we suppose that:
\begin{align}
\label{eqn:multisteptopstrictsafe}
    \frac{\partial h_0}{\partial \bs{\xi}_0}(\mb{z}_0)\big(\mb{f}_0(\mb{z}_0)&+\mb{g}_{0,\bs{\xi}}(\mb{z}_0)\mb{k}_{0,\bs{\xi}}(\mb{z}_0)\\& +\mb{g}_{0,\mb{u}}\mb{k}_{0,\mb{u}}(\mb{z}_0)\big) > - \alpha_0(h_0(\mb{z}_0)), \nonumber
\end{align}
we have the following result:
\begin{theorem}
\label{thm:cbfmultibackstep}
Let $\C_0$ be the 0-superlevel set of a smooth function $h_0:\R^{q_0}\to\R$ with $\derp{h_0}{\bs{\xi}_0}(\mb{z}_0) \neq \mb{0}$ when $h_0(\mb{z}_0) = 0$. If there exist smooth functions $\mb{k}_{0,\bs{\xi}}:\R^{p_0}\to\R^{p_1}$ and $\mb{k}_{0,\mb{u}}:\R^{p_0}\to\R^{m_0}$ and a globally Lipschitz continuous function $\alpha_0\in\mathcal{K}_{\infty}^e$ such that \eqref{eqn:multisteptopstrictsafe} holds, then the function $h$ defined in \eqref{eqn:hrecursion} is a Control Barrier Function for the system \eqref{eq:cascade_many} on the set $\C$ defined in \eqref{eqn:Cmanystep}.
\end{theorem}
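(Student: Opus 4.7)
The plan is to mirror the logical structure of Theorem~\ref{thm:cbfbackstep}, but rather than split into cases based on whether the error terms vanish, I would directly exhibit a particular controller for which the strict BF condition holds pointwise. Since $\sup_{\mb{u}\in\R^{M_r}}\dot{h}(\mb{z}_r,\mb{u}) \geq \dot{h}(\mb{z}_r,\mb{k}(\mb{z}_r))$ for any candidate $\mb{k}$, a one-shot verification with a specific controller suffices to conclude that $h$ is a CBF with $\alpha_0$ itself serving as the associated extended class $\mathcal{K}_\infty$ function.

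The concrete steps are as follows. First, the gradient-non-vanishing property $\derp{h}{\mb{z}_r}(\mb{z}_r) \neq \mb{0}$ whenever $h(\mb{z}_r)=0$ transfers verbatim from the proof of the preceding theorem: a reverse recursion shows that $\derp{h}{\bs{\xi}_i}(\mb{z}_r)=\mb{0}$ for all $i\geq 1$ forces every error term to vanish, leaving $\derp{h}{\bs{\xi}_0}(\mb{z}_r)=\derp{h_0}{\bs{\xi}_0}(\mb{z}_0)$, which is nonzero on the boundary $h_0(\mb{z}_0)=0$. Second, I instantiate the exact same recursively-defined smooth controllers $\mb{k}_{i,\bs{\xi}}$, $\mb{k}_{i,\mb{u}}$ as in the preceding proof, choosing each $\lambda_i \geq L$, where $L$ is the global Lipschitz constant of $\alpha_0$. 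Third, I re-run the laborious cancellation calculation that established
\[ \dot{h}(\mb{z}_r,\mb{k}(\mb{z}_r)) \geq -\alpha_0(h_0(\mb{z}_0)) + \sum_{i=1}^{r}\frac{\lambda_i}{2\mu_i}\|\bs{\xi}_i - \mb{k}_{i-1,\bs{\xi}}(\mb{z}_{i-1})\|_2^2, \]
noting that the sole appeal to the top-level condition was to lower-bound the leading drift term $\derp{h_0}{\bs{\xi}_0}(\mb{z}_0)(\mb{f}_0+\mb{g}_{0,\bs{\xi}}\mb{k}_{0,\bs{\xi}}+\mb{g}_{0,\mb{u}}\mb{k}_{0,\mb{u}})$ by $-\alpha_0(h_0(\mb{z}_0))$. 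Swapping in the strict hypothesis \eqref{eqn:multisteptopstrictsafe} upgrades that single inequality to a strict one, so the displayed bound becomes strict. Fourth, the subsequent Lipschitz-based rearrangement $\alpha_0(h(\mb{z}_r)) \geq \alpha_0(h_0(\mb{z}_0)) - \sum_{i=1}^{r}\frac{L}{2\mu_i}\|\bs{\xi}_i - \mb{k}_{i-1,\bs{\xi}}(\mb{z}_{i-1})\|_2^2$ is unchanged and non-strict, but composing a strict with a non-strict inequality preserves strictness, yielding $\dot{h}(\mb{z}_r,\mb{k}(\mb{z}_r)) > -\alpha_0(h(\mb{z}_r))$ for all $\mb{z}_r\in\R^{q_r}$, and hence the supremum over $\mb{u}$ is likewise strict.

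The argument is largely bookkeeping, and I do not anticipate a serious obstacle beyond those already resolved in the preceding theorem. The point that most merits attention is the recognition that it is much more economical to verify the CBF inequality through a specific exhibited controller than through a two-case argument. A case split modeled on Theorem~\ref{thm:cbfbackstep} would run into trouble when some error $\bs{\xi}_{i^*} - \mb{k}_{i^*-1,\bs{\xi}}(\mb{z}_{i^*-1}) \neq \mb{0}$ with $i^* < r$ is nonzero but the coefficient $\derp{h}{\bs{\xi}_{i^*}}\mb{g}_{i^*,\mb{u}}$ of $\mb{u}_{i^*}$ happens to vanish—the pseudo-invertibility of $\mb{g}_{i^*}=(\mb{g}_{i^*,\bs{\xi}},\mb{g}_{i^*,\mb{u}})$ only guarantees $\derp{h}{\bs{\xi}_{i^*}}\mb{g}_{i^*}\neq\mb{0}$, not that the $\mb{g}_{i^*,\mb{u}}$ block in isolation is hit, and the neighbouring states $\bs{\xi}_{i^*+1}$ are not available as free parameters. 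The controller-based route sidesteps this mixed-relative-degree subtlety entirely.
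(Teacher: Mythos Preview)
Your proposal is correct. The paper states this theorem without an explicit proof, evidently intending it to follow by analogy with Theorem~\ref{thm:cbfbackstep} and the multi-step BF construction just established. Your controller-based argument is sound, and indeed the theorem's retention of the global Lipschitz hypothesis on $\alpha_0$---which the paper expressly notes is \emph{not} needed for the case-split proof of Theorem~\ref{thm:cbfbackstep}---signals that this is the route the authors had in mind here. Your closing observation is also apt: a direct case-split would stumble in the mixed-relative-degree setting, because when the largest nonvanishing error sits at some level $i^*<r$, pseudo-invertibility of $\mb{g}_{i^*}=(\mb{g}_{i^*,\bs{\xi}},\mb{g}_{i^*,\mb{u}})$ does not force $\big(\partial h/\partial\bs{\xi}_{i^*}\big)\mb{g}_{i^*,\mb{u}}\neq\mb{0}$, so one cannot immediately conclude the supremum is unbounded. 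Exhibiting the backstepping controller and propagating the strict top-level inequality through the (non-strict) Lipschitz estimate sidesteps this cleanly.
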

Consequently, for any locally Lipschitz $\alpha_1\in\mathcal{K}_\infty^e$ such that $\alpha_1(s)\geq\alpha_0(s)$ for all $s\in\R$ and any locally Lipschitz continuous $\mb{k}_{\rm d}:\R^{q_r}\to\R^{M_r}$, we can synthesize a controller:
\begin{align}
\label{eqn:cbfbackstepqp_multi}
    \mb{k}(\mb{z}_r) =  \argmin_{\mb{u}\in\R^{M_r}}&~ \frac{1}{2}\Vert \mb{u}-\mb{k}_{\rm d}(\mb{z}_r) \Vert_2^2 \\ \textrm{s.t.}&~ \dot{h}(\mb{z}_r,\mb{u}_0,\ldots,\mb{u}_r) \geq -\alpha_1(h(\mb{z}_r)), \nonumber
\end{align}
that is locally Lipschitz continuous on $\R^{q_r}$ \cite{jankovic2018robust} and renders $h$ a BF for \eqref{eq:cascade_many} on $\C$.

\section{Joint CLF and CBF Backstepping}
\label{sec:design}
In this section we use joint Lyapunov and CBF backstepping to achieve both stability and safety of a cascaded system. 
For simplicity, let us consider the system \eqref{eqn:xdot}-\eqref{eqn:xidot}. Suppose there exists  functions $V_0:\R^n\to\R_{\geq 0}$, $h_0:\R^n\to\R$ and $\mb{k}_0:\R^n\to\R^p$ with $\mb{k}_0(\mb{0})=\mb{0}$, all twice-continuously differentiable, and functions $\gamma_1,\gamma_2,\gamma_3\in\mathcal{K}_{\infty}$ and a globally Lipschitz continuous function $\alpha_0\in\mathcal{K}_{\infty}^e$ such that \eqref{eqn:V0bds}-\eqref{eq:stability_high_level} and \eqref{eq:safety_high_level} are satisfied. Furthermore, let us define the set $\C_0\subset\R^n$ as in \eqref{eq:safeset}. As before, we wish to stabilize the state to the origin while ensuring it remains in the set $\C_0$. Let us construct twice-continuously differentiable functions $V:\R^n\times\R^p\to\R_{\geq 0}$ and $h:\R^n\times\R^p\to\R$ as:
\begin{align}
     V(\mb{x},\bs{\xi}) = & ~ V_0(\mb{x}) + \frac{1}{2 \mu_V} (\bs{\xi} - \kd)^\top (\bs{\xi} - \kd), \\  h(\mb{x},\bs{\xi}) = &~  h_0(\mb{x}) - \frac{1}{2 \mu_h} (\bs{\xi} - \kd)^\top (\bs{\xi} - \kd),
\end{align}
with $\mu_V,\mu_h\in\R_{>0}$. The time derivatives for $V$ and $h$ are given in \eqref{eqn:Vcomp_dt} and \eqref{eqn:hcomp_dt}, using their respective values $\mu_V$ and $\mu_h$. We express them compactly here as:
\begin{align}
    \dot{V}(\mb{x},\bs{\xi},\mb{u}) = & ~ b_{V,1}(\mb{x},\bs{\xi})+\frac{1}{\mu_V}\mb{a}_1(\mb{x},\bs{\xi})^\top\mb{u} \\
    \dot{h}(\mb{x},\bs{\xi},\mb{u})= & ~ b_{h,1}(\mb{x},\bs{\xi})-\frac{1}{\mu_h}\mb{a}_1(\mb{x},\bs{\xi})^\top\mb{u},
\end{align}
for functions $b_{V,1},b_{h,1}:\R^n\times \R^p\to\R$ and $\mb{a}_1:\R^n\times \R^p\to\R^m$. As we saw in the individual backstepping cases, it was possible to design (different) controllers such that the bounds on the derivatives in \eqref{eqn:dotVgammas} and \eqref{eqn:hdotalphas} were met. This implies that:
\begin{align}
    \inf_{\mb{u}\in\R^m} b_{V,1}(\mb{x},\bs{\xi})&+\frac{1}{\mu_V}\mb{a}_1(\mb{x},\bs{\xi})^\top\mb{u} \\ &\leq -\gamma_3(\Vert\mb{x}\Vert)-\gamma_3'(\Vert
    \bs{\xi} - \kd)\Vert_2), \nonumber\\
   \inf_{\mb{u}\in\R^m} -b_{h,1}(\mb{x},\bs{\xi})&+\frac{1}{\mu_h}\mb{a}_1(\mb{x},\bs{\xi})^\top\mb{u} \\ &\leq \alpha_0(h_0(\mb{x})) -\frac{\lambda}{2 \mu_h}
    \|\bs{\xi} - \kd\|^2_2, \nonumber
\end{align}
We can rewrite these two inequality constraints as:
\begin{align}
    \mb{a}_1(\mb{x},\bs{\xi})^\top\mb{u} \leq&~ c_{V,1}(\mb{x},\bs{\xi}), \label{eqn:jointstabconst}\\
    \mb{a}_1(\mb{x},\bs{\xi})^\top\mb{u} \leq&~ c_{h,1}(\mb{x},\bs{\xi}), \label{eqn:jointsafeconst}
\end{align}
for functions $c_{V,1},c_{h,1}:\R^n\times\R^p\to\R$. A key observation is that these constraints are mutually satisfiable, i.e, if we design a controller $\mb{k}:\R^n\times\R^p\to\R^m$ such that:
\begin{equation}
    \mb{a}_1(\mb{x},\bs{\xi})^\top\mb{k}(\mb{x},\bs{\xi}) \leq \min\{c_{V,1}(\mb{x},\bs{\xi}),c_{h,1}(\mb{x},\bs{\xi})\},
\end{equation}
for all $(\mb{x},\bs{\xi})\in\R^n\times\R^p$, then both \eqref{eqn:jointstabconst} and \eqref{eqn:jointsafeconst} are met. Thus under this controller, $V$ is a Lyapunov function and $h$ is a Barrier Function on $\C$ for the closed-loop system \eqref{eqn:xdotcl}-\eqref{eqn:xidotcl}, such that we may conclude both stability and safety. An optimization-based controller achieving this is defined as:
\begin{align}
    \mb{k}(\mb{x},\bs{\xi})  =  \argmin_{\mb{u}\in\R^m}&~ \frac{1}{2}\Vert \mb{u} \Vert_2^2 \\ \textrm{s.t.}&~  \mb{a}_1(\mb{x},\bs{\xi})^\top\mb{u} \leq \min\{c_{V,1}(\mb{x},\bs{\xi}),c_{h,1}(\mb{x},\bs{\xi})\}. \nonumber
\end{align}
The intuition behind the joint feasibility of these constraints is that the controller $\mb{k}_0$ has been designed to provide both stability and safety, and we are using the input $\mb{u}$ to drive $\bs{\xi}$ to $\mb{k}_0(\mb{x})$, thus benefiting both stability and safety. The challenge is then to design a continuously differentiable controller $\mb{k}_0$ satisfying both \eqref{eq:stability_high_level} and \eqref{eq:safety_high_level}. To accomplish this, we will use the techniques presented in \cite{ong2019universal}. We note that designing smooth stabilizing controllers via Lyapunov functions often faces challenges at the origin \cite{sontag1989universal}. With a cascaded system, we may encounter the origin of the top-level state without the entire state being at the origin. Thus, in this work we slightly relax \eqref{eq:stability_high_level} to ensure smoothness, in which case we achieve practical stability as opposed to asymptotic stability.

Suppose that we are given a smooth desired controller $\mb{k}_{0,{\rm d}}:\R^n\to\R^p$ we wish to implement at the top-level, that is not necessarily stable nor safe. Consider the top-level constraints:
\begin{align*}
     \derp{V_0}{\mb{x}}(\mb{x})(\mb{f}_0(\mb{x})+\mb{g}_0(\mb{x})(\mb{k}_{0,{\rm d}}(\mb{x})+\mb{v})) \leq &~ -\gamma_3(\Vert\mb{x}\Vert_2) \\ &~ +\delta\psi(\|\mb{x}\|_2), \\
     \derp{h_0}{\mb{x}}(\mb{x})\left(\mb{f}_0(\mb{x})+\mb{g}_0(\mb{x})(\mb{k}_{0,{\rm d}}(\mb{x})+\mb{v}) \right)\geq &~ -\alpha_0(h_0(\mb{x})).
\end{align*}
with $\delta\in\R_{> 0}$ and $\psi:\R\to\R_{\geq 0}$ a bump function defined as:
\begin{equation}
\label{eqn:bump}
\psi(s) = \begin{cases}\exp \left( -\frac{1}{\epsilon^2-s^2} \right), & s \in (-\epsilon,\epsilon),\\ 0, & {\rm otherwise}, \end{cases}
\end{equation}
with $\epsilon\in\R_{>0}$. We can rewrite these constraints as:
\begin{align}
    \mb{a}_{V,0}(\mb{x})^\top\mb{v}+b_{V,0}(\mb{x}) \leq&~ 0, \label{eqn:V0simp} \\
    \mb{a}_{h,0}(\mb{x})^\top\mb{v}+b_{h,0}(\mb{x}) \leq&~ 0, \label{eqn:h0simp}
\end{align}
for functions $\mb{a}_{V,0},\mb{a}_{h,0}:\R^n\to\R^p$ and $b_{V,0},b_{h,0}:\R^n\to\R$. Assuming $V_0$ is a CLF and $h_0$ is a CBF on $\C_0$ for \eqref{eqn:xdot} implies the set-valued functions $\Uc_V,\Uc_h:\R^n\to\mathcal{P}(\R^p)$ defined as:
\begin{equation}
\Uc_i(\mb{x}) = \setdefb{\mb{v}\in\R^p}{\mb{a}_{i,0}(\mb{x})^\top \mb{v} + b_{i,0}(\mb{x}) \leq 0},
\end{equation}
with $i \in \{V,h\}$ satisfy $\mathcal{U}_i(\mb{x})\neq\{\emptyset\}$ for all $\mb{x}\in\R^n$. Moreover, for simplicity let us assume that $\Uc_V(\mb{x})\cap \Uc_h(\mb{x}) \neq\{\emptyset\}$ for all $\mb{x}\in\R^n$, such that there exists a $\mb{v}$ that satisfies both \eqref{eqn:V0simp} and \eqref{eqn:h0simp} simultaneously. We note that if this is not possible, this construction can be done relaxing stability and enforcing safety as is common with combined CLF-CBF methods \cite{ames2019control}.

For a set $\mathcal{U}\subseteq\R^p$, define the Gaussian weighted centroid function $\map{\bs{\mu}}{\R^n}{\R^p}$ as:
\begin{equation}
\bs{\mu}(\mb{x};\Uc) \defeq \frac{\int_{ \Uc} \mb{v}\phi(\mb{x},\mb{v})d\mb{v}}{\int_{\Uc} \phi(\mb{x},\mb{v})d\mb{v\\}},
\end{equation}
where $\phi:\R^n\times\R^p\to\R_{\geq 0}$ is defined as:
\begin{align}
    \phi(\mb{x},\mb{v}) & = \frac{1}{\sqrt{2\pi}} {\rm e}^{-\|\mb{v}\|_2^2/(2\sigma(\mb{x}))},
\end{align}
with a smooth function $\sigma:\R^n\to\R_{\geq 0}$. As in \cite{ong2019universal}, we may synthesize a controller:
\begin{multline}\label{eq:smooth}
\mb{k}_0(\mb{x}) = \mb{k}_{0,{\rm d}}(\mb{x}) + \zeta(\rho(\mb{x}))(\bs{\mu}(\mb{x};\Uc_V)+\bs{\mu}(\mb{x};\Uc_h))\\+(1-\zeta(\rho(\mb{x})))\bs{\mu}(\mb{x};\Uc_V\cap\Uc_h),
\end{multline}
where $\map{\zeta}{\R}{[0,1]}$ is a smooth partition of unity function with $\zeta(s)=0$ for $s\leq0$ and $\zeta(s)=1$ for $s\geq 1$, and:
\begin{equation}
\rho(\mb{x}) = \frac{\mb{a}_{V,0}(\mb{x})^\top\mb{a}_{h,0}(\mb{x})}{\|\mb{a}_{V,0}(\mb{x})\|_2\|\mb{a}_{h,0}(\mb{x})\|_2},
\end{equation}
encodes the angle between $\mb{a}_{V,0}$ and $\mb{a}_{h,0}$. The Gaussian weighted centroid functions in~\eqref{eq:smooth} have closed-form solutions \cite{tallis1961moment,tallis1965plane}. The controller in~\eqref{eq:smooth} respects both constraints, i.e., $(\mb{k}_0(\mb{x}) - \mb{k}_{0,{\rm d}}(\mb{x}))\in\Uc_V(\mb{x})\cap \Uc_h(\mb{x})$. In addition, $\mb{k}_0$ is smooth if the functions $\mb{a}_{V,0}, \mb{a}_{h,0}, b_{V,0}$ and $b_{h,0}$ are smooth.

\section{Simulation}
\label{sec:sim}
We now demonstrate CBF backstepping with two examples.

\begin{example} \label{exmpl:double_integrator}
Consider the planar double integrator system:
\begin{align}
\begin{split}
    \dot{\mb{x}} = \bs{\xi}, \quad
    \dot{\bs{\xi}} = \mb{u},
\end{split}
\end{align}
with ${\mb{x}, \bs{\xi}, \mb{u} \in \R^2}$.
We intend to control the system to a goal position ${\mb{x}_{\rm g} \in \R^2}$ (such that ${\lim_{t \to \infty} \bs{\varphi}_{\mb{x}}(t) = \mb{x}_{\rm g}}$) while avoiding an obstacle centered at ${\mb{x}_{\rm O} \in \R^2}$ with radius ${R_{\rm O} \in \R_{>0}}$.
Collision-free behavior is captured by the safe set $\C_0$ with:
\begin{equation}
    h_0(\mb{x}) = \frac{1}{2} \left( \| \mb{x} - \mb{x}_{\rm O} \|_2^2 - R_{\rm O}^2 \right),
    \label{eq:example_CBF}
\end{equation}
that satisfies ${h_0(\mb{x}) = 0 \implies \dhdx = (\mb{x} - \mb{x}_{\rm O})^\top \neq \mb{0}}$.
To reach the goal $\mb{x}_{\rm g}$, we rely on the desired smooth controller $\mb{k}_{0,{\rm d}}(\mb{x}) = - K_{\rm p}(\mb{x} - \mb{x}_{\rm g})$ which is used to define $\mb{k}_{0}$ through the smooth safety filter in \eqref{eq:smooth}. This is used to define $h$ as in \eqref{eqn:hcomp}, which used with the desired controller $\mb{k}_{\rm d}(\mb{x},\bs{\xi}) = - K_{\rm v}(\bs{\xi} - \kd)$
in the quadratic-program safety filter \eqref{eqn:cbfbackstepqp}. 



\begin{figure}
\centering
\includegraphics[scale=1.0]{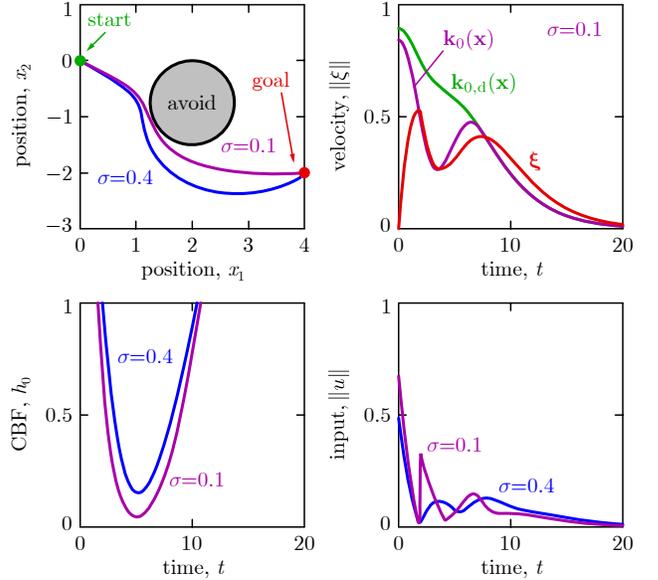}
\caption{
Obstacle avoidance with double integrator model via backstepping.
The system successfully avoids the obstacle and reaches the goal, while the conservatism of the route can be tuned by the smoothing parameter.
}
\label{fig:double_integrator}
\end{figure}

The closed-loop system is simulated in Fig.~\ref{fig:double_integrator} for
${K_{\rm p} = 0.2}$,
${K_{\rm v} = 0.8}$,
${\mu=1}$,
${\alpha_0(s) = \alpha_1(s) = s}$,
${\sigma \equiv 0.1}$ (purple) and
${\sigma \equiv 0.4}$ (blue).
The system safely reaches the goal without colliding with the obstacle.
As the smoothing parameter $\sigma$ is increased, the system takes a more conservative route farther from the obstacle.
This reduces the peak in the control input.
\end{example}

\begin{example} \label{exmpl:unicycle}
Consider the planar unicycle model:
\begin{align}
\begin{split}
    \dot{x} = v \cos \psi, \quad
    \dot{y} = v \sin \psi, \quad
    \dot{\psi} = \omega.
\end{split}
\end{align}
where ${x,y,\psi,v,\omega \in \R}$.
This system can be written as:
\begin{align}
    \dot{\mb{x}} = \bs{\xi} u_0\triangleq\mb{w}, \quad
    \dot{\bs{\xi}} = \begin{bmatrix}
    -\xi_2 & \xi_1
    \end{bmatrix}^\top u_1,
    \label{eq:unicycle}
\end{align}
with
$\mb{x}=\begin{bmatrix}
x & y
\end{bmatrix}^\top$ and
$\bs{\xi}=\begin{bmatrix}
\cos \psi & \sin \psi
\end{bmatrix}^\top$.
Our goal is obstacle avoidance like in Example~\ref{exmpl:double_integrator}, via the CBF \eqref{eq:example_CBF}.

The unicycle model is in the form of \eqref{eq:cascade_many} except for an additional nonlinearity: the product of the heading direction $\bs{\xi}$ and the speed $u_0$ that gives the velocity vector ${\mb{w} = \bs{\xi} u_0}$.
With some care, this nonlinearity can be handled as follows.
First, notice that \eqref{eq:unicycle} is affine in both $\mb{w}$ and $u_0$.
Thus, a safe value $\mb{k}_0(\mb{x})$ for the velocity $\mb{w}$ can be designed such that it satisfies \eqref{eq:safety_high_level},
which is the same as $\mb{k}_0(\mb{x})$ in Example~\ref{exmpl:double_integrator}.
We convert the safe velocity $\mb{k}_0(\mb{x})$ into a safe heading direction ${\mb{k}_{0,\bs{\xi}}(\mb{x}) = \mb{k}_0(\mb{x})/\|\mb{k}_0(\mb{x})\|_2}$ and safe speed ${k_{0,u}(\mb{x}) = \|\mb{k}_0(\mb{x})\|_2}$
by restricting to ${\mb{k}_0(\mb{x}) \neq \mb{0}}$.
Then, $\mb{k}_{0,\bs{\xi}}(\mb{x})$ is incorporated into the composite barrier function $h$ in \eqref{eqn:hrecursion}.
By denoting the safe heading angle as $\psi_0(\mb{x})$, i.e., by writing
${\mb{k}_{0,\bs{\xi}}(\mb{x}) = \begin{bmatrix}
\cos \psi_0(\mb{x}) & \sin \psi_0(\mb{x})
\end{bmatrix}^\top}$,
we get:
\begin{equation}
    h(\mb{x},\bs{\xi}) = h_0(\mb{x}) - \frac{1}{\mu} \big( 1 - \cos(\psi - \psi_0(\mb{x})) \big),
\end{equation}
that gives penalty to heading in unsafe directions.
Then, we synthesize the controller ${\mb{u} = \begin{bmatrix}
u_0 & u_1 \end{bmatrix}^\top = \mb{k}(\mb{x},\bs{\xi})}$ via backstepping based on \eqref{eqn:cbfbackstepqp_multi}, where we use the desired controller
$\mb{k}_{\rm d}(\mb{x},\bs{\xi}) =
\begin{bmatrix}
K_{\rm p} \|\mb{x} - \mb{x}_{\rm g}\|_2 &
- K_\psi \big( \sin \psi - \sin \psi_0(\mb{x}) \big) \end{bmatrix}^\top$.

The behavior of the closed-loop system is shown by simulation results in Fig.~\ref{fig:unicycle} for
${K_{\rm p} = 0.2}$,
${K_{\rm \psi} = 3}$,
${\mu=1}$,
${\alpha_0(s) = \alpha_1(s) = s}$,
${\sigma \equiv 0.1}$ (purple) and
${\sigma \equiv 0.4}$ (blue).
Again, safety is guaranteed and more conservative smoothing makes the unicycle take a longer route.
We remark that safety could also be enforced without backstepping, by relying on the input $u_0$ (speed) only.
Then, the input $u_1$ (angular velocity) would not be constrained and could be chosen freely.
This would result in the unicycle stopping in front of the obstacle and not reaching the goal (see black trajectory).
As opposed, backstepping synthesizes a barrier function $h$ such that inputs at all levels are utilized for safety.
Such barrier synthesis is nontrivial, and backstepping provides a systematic solution.

\begin{figure}
\centering
\includegraphics[scale=1.0]{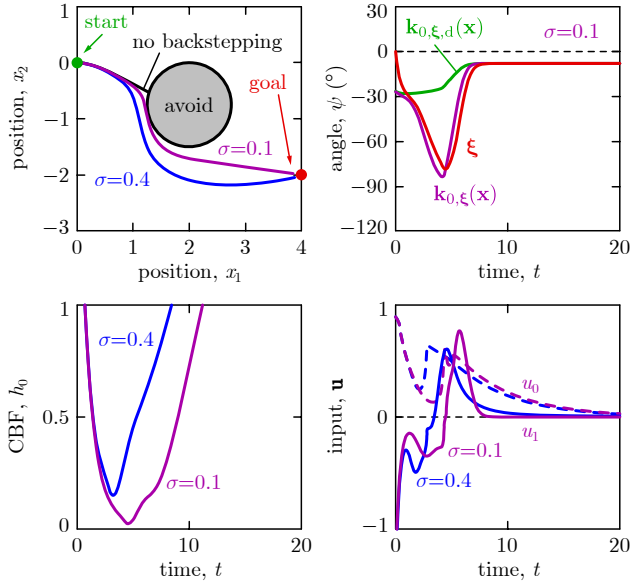}
\caption{
Obstacle avoidance with unicycle model via backstepping.
Remarkably, the unicycle is able to drive around the obstacle, while a standard safety filter without backstepping makes the unicycle stop in front of the obstacle.
}
\label{fig:unicycle}
\end{figure}

\end{example}

\section{Conclusion}
In conclusion, we have proposed a novel approach for using backstepping with Control Barrier Functions to design safety-critical controllers for nonlinear systems. Moreover, we unified this approach with Control Lyapunov Functions to achieve both stability and safety. Future work includes considering alternative methods for the smooth design of top-level controllers that are stabilizing and safe, and exploring the robustness to parameter uncertainty seen with backstepping.

\bibliographystyle{IEEEtran} 
\bibliography{main}

\end{document}